\newcommand{\cF}{{\cal F}}
\newcommand{\cC}{{\cal C}}
\newtheorem{theorem}{Theorem}
\newtheorem{lemma}{Lemma}
\newtheorem{corollary}{Corollary}
\newtheorem{example}{Example}
\begin{document}

\title{
Lower bounds on the minimum average distance of binary codes}

\author{
{\sc Beniamin Mounits}\thanks{CWI, Amsterdam, The Netherlands, e-mail: {\tt B.Mounits@cwi.nl}.}}

\maketitle

\begin{abstract}
Let $\beta(n,M)$ denote the minimum average Hamming distance of a binary code of length $n$ and cardinality $M.$ In this paper we consider lower bounds on $\beta(n,M).$ All the known lower bounds on $\beta(n,M)$ are useful when $M$ is at least of size about $2^{n-1}/n.$ We derive new lower bounds which give good estimations when size of $M$ is about $n.$ These bounds are obtained using linear programming approach. In particular, it is proved that $\displaystyle{\lim_{n\to\infty}\beta(n,2n)=5/2}.$ We also give new recursive inequality for $\beta(n,M).$
\end{abstract}

\vskip 0.1cm

\noindent {\bf Keywords:} Binary codes, minimum average distance, linear programming

\newpage
\section{Introduction}

Let $\cF_2=\{0,1\}$ and let $\cF_2^n$ denotes the set of all binary words of length $n$. For $x,y\in \cF_2^n,$ $d(x,y)$ denotes the Hamming distance between $x$ and $y$ and $wt(x)=d(x,{\bf 0})$ is the weight of $x$, where ${\bf 0}$ denotes all-zeros word. A binary code $\cC$ of length $n$ is a nonempty subset of $\cF_2^n.$ An $(n,M)$ code $\cC$ is a binary code of length $n$ with cardinality $M.$ In this paper we will consider only binary codes.

The average Hamming distance of an $(n,M)$ code $\cC$ is defined by
\begin{align*}
\overline{d}(\cC)=\frac{1}{M^2}\sum_{c\in \cC}\sum_{c'\in \cC}d(c,c')~.
\end{align*}
The \emph{minimum average Hamming distance} of an $(n,M)$ code is defined by 
\begin{align*}
\beta(n,M)=\min\{~\overline{d}(\cC):~\cC~\textrm{is an}~(n,M)~\textrm{code}\}~.
\end{align*}
An $(n,M)$ code $\cC$ for which $\overline{d}(\cC)=\beta(n,M)$ will be called \emph{extremal} code.

The problem of determining $\beta(n,M)$ was proposed by Ahlswede and Katona in \cite{AhlKat}. Upper bounds on $\beta(n,M)$ are obtained by constructions. For survey on the known upper bounds the reader is referred to \cite{Kun}. In this paper we consider the lower bounds on $\beta(n,M).$
We only have to consider the case where $1\leq M\leq 2^{n-1}$ because of the following result which was proved in \cite{FuWeiYeu}.
\begin{lemma}\label{lem:complement_determ}
For $1\leq M\leq 2^n$
\begin{align*}
\beta(n,2^n-M)=\frac{n}{2}-\frac{M^2}{(2^n-M)^2}\left(\frac{n}{2}-\beta(n,M)\right)~.
\end{align*}
\end{lemma}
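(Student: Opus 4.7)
I would prove the formula as a pointwise identity relating $\overline{d}(\overline{\cC})$ to $\overline{d}(\cC)$ for the set-complement $\overline{\cC} := \cF_2^n \setminus \cC$, and then obtain the lemma by taking minima. The starting point is the coordinate-wise observation that for every fixed $x \in \cF_2^n$, exactly half of the words of $\cF_2^n$ disagree with $x$ in any given coordinate, so $\sum_{y \in \cF_2^n} d(x, y) = n\, 2^{n-1}$, independent of $x$.

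Summing this identity over $x$ gives $\sum_{x, y \in \cF_2^n} d(x,y) = n\, 2^{2n-1}$. On the other hand, splitting the same sum according to whether each endpoint lies in $\cC$ or in $\overline{\cC}$, and writing $S := \sum_{x \in \cC, y \in \overline{\cC}} d(x,y)$, yields
\begin{align*}
n\, 2^{2n-1} = M^2\, \overline{d}(\cC) + 2S + (2^n - M)^2\, \overline{d}(\overline{\cC}).
\end{align*}
Applying the single-sum identity to $\sum_{x \in \cC}\sum_{y \in \cF_2^n} d(x,y) = Mn\, 2^{n-1}$ and splitting the inner sum according to whether $y \in \cC$ or $y \in \overline{\cC}$ gives $S = Mn\, 2^{n-1} - M^2\, \overline{d}(\cC)$. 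Substituting this back and using $(2^n - M)^2 - M^2 = 2^n(2^n - 2M)$, routine algebra produces the pointwise identity
\begin{align*}
\overline{d}(\overline{\cC}) = \frac{n}{2} - \frac{M^2}{(2^n - M)^2}\parenv{\frac{n}{2} - \overline{d}(\cC)}.
\end{align*}

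Because $\cC \mapsto \overline{\cC}$ is a bijection from $(n, M)$ codes to $(n, 2^n - M)$ codes, minimizing the left side over all $(n,M)$ codes $\cC$ gives $\beta(n, 2^n - M)$. The right side is an affine function of $\overline{d}(\cC)$ with positive slope $M^2 / (2^n - M)^2$, so its minimum is attained exactly when $\overline{d}(\cC) = \beta(n, M)$, which yields the claimed formula. The only real ``obstacle'' is the bookkeeping in the two splittings above; the key conceptual point is that the cross-sum $S$ depends on $\cC$ only through $\overline{d}(\cC)$, a direct consequence of $\sum_{y \in \cF_2^n} d(x,y)$ being constant in $x$.
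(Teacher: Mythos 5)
Your argument is correct: the double-counting identities check out, the substitution indeed yields the pointwise relation $\overline{d}(\overline{\cC}) = \frac{n}{2} - \frac{M^2}{(2^n-M)^2}\left(\frac{n}{2} - \overline{d}(\cC)\right)$, and the monotone-affine minimization step is sound. The paper itself states this lemma without proof (citing Fu, Wei and Yeung), and your complementation argument is precisely the standard proof of that result, so there is nothing further to compare.
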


First exact values of $\beta(n,M)$ were found by Jaeger et al. \cite{JaeKheMol}.
\begin{theorem}\cite{JaeKheMol}
$\beta(n,4)=1,$ $\beta(n,8)=3/2,$ whereas for $M\leq n+1,$ $M\ne 4,8,$ we have $\displaystyle{\beta(n,M)=2\left(\frac{M-1}{M}\right)^2}.$
\end{theorem}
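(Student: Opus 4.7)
The starting point is the well-known identity
\[
\bar d(\cC) \;=\; \frac{2}{M^2}\sum_{i=1}^{n} a_i(M-a_i),
\]
where $a_i$ is the weight of the $i$-th column obtained by stacking the codewords of $\cC$ as rows; it follows by counting disagreements coordinate by coordinate. Since $\bar d$ is invariant under translating $\cC$ by a fixed word, I may assume $\mathbf 0\in\cC$, and then $0\le a_i\le M-1$, so every nonzero column contributes at least $M-1$ to the sum (the minimum of $a(M-a)$ on $\{1,\dots,M-1\}$, attained at $a=1$). The three claimed upper bounds are realised by explicit constructions: $\{\mathbf 0,e_1,\ldots,e_{M-1}\}$ has $M-1$ columns of weight one and gives $\bar d=2((M-1)/M)^2$, whereas $\cF_2^2$ and $\cF_2^3$ padded with zero columns give $\bar d=1$ and $\bar d=3/2$.

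For the matching lower bound it suffices to prove $\sum_i a_i(M-a_i)\ge(M-1)^2$ for $M\le n+1$, $M\notin\{4,8\}$, together with $\sum_i a_i(M-a_i)\ge 8$ for $M=4$ and $\ge 48$ for $M=8$. Let $s=|\{i:a_i\ne 0\}|$; since the codewords stay distinct after restriction to these columns, $M\le 2^s$. If $s\ge M-1$, summing the bound $M-1$ over the $s$ nonzero columns already gives $\sum_i a_i(M-a_i)\ge(M-1)s\ge(M-1)^2$ and we are done. Otherwise $s\le M-2$, so $M\le 2^s\le 2^{M-2}$ and $\cC$ lies inside $\cF_2^s$ (padded). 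When $M=2^s$, $\cC=\cF_2^s$ is forced, every $a_i=2^{s-1}$, and $\sum_i a_i(M-a_i)=s\cdot 2^{2s-2}$; a direct arithmetic check shows that $s\cdot 2^{2s-2}\ge(2^s-1)^2$ holds for every $s\ge 4$ and fails only at $s\in\{2,3\}$, which is precisely the origin of the exceptions $M=4,8$ with sharp values $8$ and $48$. When $\lceil\log_2 M\rceil\le s\le M-2$ and $M<2^s$, I would minimise $\sum_i a_i(M-a_i)$ over $M$-subsets of $\cF_2^s$ containing $\mathbf 0$, using the concavity of $a\mapsto a(M-a)$ and the complement identity $a_i(\cC)+a_i(\cF_2^s\setminus\cC)=2^{s-1}$; Lemma~\ref{lem:complement_determ} may also be invoked to swap $M\leftrightarrow 2^s-M$ when the complement has fewer elements.

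The main obstacle is this final sub-case ($M<2^s$), since both the target $(M-1)^2$ and the column weights $a_i$ shift as codewords are added to or removed from the cube, making a naive induction on $M$ delicate. I expect the cleanest route is a convexity/averaging argument, along the linear-programming lines foreshadowed in the abstract, combined with the explicit failures of $s\cdot 2^{2s-2}\ge(2^s-1)^2$ at $s\in\{2,3\}$ which isolate exactly the two exceptional values $M\in\{4,8\}$.
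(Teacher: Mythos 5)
The paper does not prove this theorem; it is quoted from \cite{JaeKheMol} as a known external result, so there is no internal proof to compare against and your attempt must stand on its own. Your setup is sound: the identity $\overline{d}(\cC)=\frac{2}{M^2}\sum_i a_i(M-a_i)$, the translation to $\mathbf{0}\in\cC$, the three constructions giving the upper bounds, and the two easy regimes are all correct --- namely $s\ge M-1$, where the per-column bound $a_i(M-a_i)\ge M-1$ already yields $(M-1)^2$, and $M=2^s$, where the code is forced to be the full cube and the comparison of $s\cdot 4^{s-1}$ with $(2^s-1)^2$ correctly isolates $s\in\{2,3\}$, i.e.\ $M\in\{4,8\}$, as the exceptions.

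The gap is the case you flag yourself: $\lceil\log_2 M\rceil\le s\le M-2$ with $M<2^s$. There you only list tools you ``would'' use (concavity, the complement identity, Lemma~\ref{lem:complement_determ}) without carrying out any argument, and this case cannot be waved away, because it is precisely where the per-column bound is too weak. A concrete instance: for $M=8$ with $s\in\{4,5,6\}$ nonzero columns you need $\sum_i a_i(M-a_i)\ge 48$ to certify $\beta(n,8)=3/2$, but $(M-1)s$ gives at most $42$; so even your exceptional value for $M=8$ is not yet established. Likewise for generic $M$ (say $M=6$, $s=3$) one needs $\sum_i a_i(6-a_i)\ge 25$ while three columns contribute between $24$ and $27$, so the conclusion hinges on excluding near-extremal column-weight vectors by a distinctness/parity argument you have not supplied. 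Until this sub-case is closed --- e.g.\ by a rigorous minimisation of $MW-\sum_i a_i^2$ over $M$-subsets of $\cF_2^s$ containing $\mathbf{0}$, or by making the complementation reduction precise --- the lower bound, and hence the theorem, is not proved.
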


Next, Alth{\"{o}}fer and Sillke \cite{AltSill} gave the following bound.
\begin{theorem}\label{thm:Alt_Sil}\cite{AltSill}
\begin{align*}
\beta(n,M)\geq \frac{n+1}{2}-\frac{2^{n-1}}{M}~,
\end{align*}
where equality holds only for $M=2^n$ and $M=2^{n-1}.$
\end{theorem}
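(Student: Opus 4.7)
The plan is to prove Theorem \ref{thm:Alt_Sil} via discrete Fourier analysis on $\cF_2^n$ (equivalently, the first MacWilliams identity). The starting point is the coordinatewise identity $d(c,c')=\sum_{i=1}^n(1-(-1)^{c_i+c'_i})/2$, which turns the total sum of pairwise distances into a sum of elementary character sums over $\cC$. Concretely, for each $\alpha\in\cF_2^n$ I would set $\sigma_\alpha\eqdef\sum_{c\in\cC}(-1)^{\alpha\cdot c}$ and sum the identity over ordered pairs $c,c'\in\cC$ to obtain
\begin{align*}
\sum_{c,c'\in\cC}d(c,c')=\frac{nM^2}{2}-\frac{1}{2}\sum_{i=1}^n\sigma_{e_i}^2,
\end{align*}
where $e_i$ denotes the $i$-th unit vector. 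Dividing by $M^2$ turns the desired lower bound on $\overline{d}(\cC)$ into the upper bound $\sum_{i=1}^n\sigma_{e_i}^2\leq 2^nM-M^2$.

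To establish that upper bound I would use Parseval's identity. Expanding the square and using $\sum_{\alpha\in\cF_2^n}(-1)^{\alpha\cdot v}=2^n[v={\bf 0}]$ gives $\sum_{\alpha\in\cF_2^n}\sigma_\alpha^2=2^nM$. Since $\sigma_{{\bf 0}}=M$, isolating the zero term and dropping the manifestly nonnegative weight-$\geq 2$ contributions yields $\sum_{i=1}^n\sigma_{e_i}^2\leq\sum_{\alpha\neq{\bf 0}}\sigma_\alpha^2=2^nM-M^2$. A single line of algebra then delivers $\overline{d}(\cC)\geq (n+1)/2-2^{n-1}/M$.

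For the equality claim I would trace the chain backwards: equality forces $\sigma_\alpha=0$ for every $\alpha$ with $wt(\alpha)\geq 2$, so Fourier inversion makes the indicator $\chi_\cC$ an affine $\{0,1\}$-valued function on $\cF_2^n$, and the only such functions are the constants $0$ and $1$ and the indicators of affine hyperplanes; these correspond to $M\in\{0,2^n,2^{n-1}\}$, matching the statement. The main obstacle is negligible once Fourier is in play; indeed, the same proof can be delivered purely in MacWilliams language by combining $A'_1=n-2\overline{d}(\cC)\geq 0$ with the identity $\sum_k A'_k=2^n/M$ and the nonnegativity $A'_k\geq 0$ of the dual distance distribution.
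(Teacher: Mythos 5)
Your proof is correct and is essentially the paper's own argument: your Parseval/character-sum computation is exactly Delsarte's inequality $B_1\leq\sum_{k\geq1}B_k=2^n/M-1$ combined with $\overline{d}(\cC)=\tfrac12(n-B_1)$, which is how the paper recovers this bound (the certificate $\lambda(x)\equiv-1$ in Corollary \ref{cor:first_bound}), as you yourself note in your closing remark. The only added value is that you make the Delsarte positivity self-contained and actually carry out the equality analysis (identifying the extremal codes as $\cF_2^n$ and coordinate half-spaces), which the paper asserts but does not prove.
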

\noindent Xia and Fu \cite{XiaFu} improved Theorem \ref{thm:Alt_Sil} for odd $M.$
\begin{theorem}\label{thm:Xia_Fu}\cite{XiaFu}
If $M$ is odd, then
\begin{align*}
\beta(n,M)\geq \frac{n+1}{2}-\frac{2^{n-1}}{M}+\frac{2^n-n-1}{2M^2}~.
\end{align*}
\end{theorem}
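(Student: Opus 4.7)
The plan is to prove this via a Parseval (character-sum) argument combined with a parity observation specific to odd $M$. The starting point is the identity
\[
\overline{d}(\cC) \;=\; \frac{2}{M^2}\sum_{i=1}^{n} a_i(M-a_i) \;=\; \frac{n}{2} - \frac{1}{2M^2}\sum_{i=1}^{n} t_i^2,
\]
where $a_i=|\{c\in\cC : c_i=1\}|$ and $t_i = M-2a_i$. Thus proving a lower bound on $\overline{d}(\cC)$ reduces to proving an upper bound on $\sum_{i=1}^n t_i^2$.

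For each $S\subseteq\{1,\dots,n\}$ introduce the character sum $f_S=\sum_{c\in\cC}(-1)^{\sum_{i\in S}c_i}$; then $t_i=f_{\{i\}}$ and $f_\emptyset=M$. Parseval's identity applied to the indicator function of $\cC$ gives $\sum_{S} f_S^2 = 2^n M$, hence
\[
\sum_{i=1}^n t_i^2 \;=\; 2^n M - M^2 - \sum_{S:\; |S|\notin\{0,1\}} f_S^2.
\]
Discarding the last sum (i.e.\ using only $f_S^2\ge 0$) already recovers Theorem~\ref{thm:Alt_Sil}, so any improvement must come from a nontrivial lower bound on that sum.

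The key observation for odd $M$ is a parity constraint: $f_S$ counts the codewords with even $S$-sum minus those with odd $S$-sum, and these two counts add to $M$, so $f_S\equiv M\pmod{2}$ for every $S$. When $M$ is odd each $f_S$ is therefore a nonzero odd integer, so $f_S^2\ge 1$. Since there are exactly $2^n-n-1$ subsets $S$ with $|S|\notin\{0,1\}$, we obtain
\[
\sum_{i=1}^n t_i^2 \;\le\; 2^n M - M^2 - (2^n - n - 1),
\]
and substituting this into the formula for $\overline{d}(\cC)$ yields the stated inequality after a short simplification. The only genuinely new ingredient beyond the Alth\"ofer--Sillke argument is the parity step, so I do not anticipate a real obstacle; the only care needed is in the final bookkeeping, to verify that the correction term comes out exactly to $(2^n-n-1)/(2M^2)$.
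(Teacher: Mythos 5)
Your proof is correct, and in substance it is the same argument the paper attributes to Xia and Fu: under the standard dictionary $B_k=\frac{1}{M^2}\sum_{|S|=k}f_S^2$, your opening identity is Lemma~\ref{key_observ}, your Parseval identity is $\sum_{k=0}^n B_k=2^n/M$ from (\ref{B_i_identities}), and your parity observation $f_S^2\geq 1$ for odd $M$ is precisely the constraint $B_i\geq\binom{n}{i}/M^2$ of \cite{BBMOS} that the paper cites as the extra ingredient behind this bound. The only difference is presentational: you derive that constraint on the fly rather than quoting it, which makes the argument self-contained, and your bookkeeping does check out.
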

\noindent Further, Fu et al. \cite{FuWeiYeu} found the following bounds.
\begin{theorem}\label{thm:Fu_Wei_Yeu}\cite{FuWeiYeu}
\begin{align*}
\beta(n,M)\geq\frac{n+1}{2}-\frac{2^{n-1}}{M}+\frac{2^n-2n}{M^2}~,~~\textrm{\emph{if}}~~M\equiv2(mod~4)~,\\
\\
\beta(n,M)\geq \frac{n}{2}-\frac{2^{n-2}}{M}~,~~\textrm{\emph{for}}~~M\leq 2^{n-1}~,~~~~~~~~~~~~~~~~~~~~~~\\
\\
\beta(n,M)\geq \frac{n}{2}-\frac{2^{n-2}}{M}+\frac{2^{n-1}-n}{2M^2}~,~~\textrm{\emph{if}}~~M~\textrm{\emph{is odd and}}~~M\leq 2^{n-1}-1~.
\end{align*}
\end{theorem}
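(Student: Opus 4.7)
The plan is to deduce all three inequalities from a single character-sum identity. Writing $a_i := |\{c \in \cC \colon c_i = 1\}|$ and $\chi_u(\cC) := \sum_{c \in \cC}(-1)^{u \cdot c}$ for $u \in \cF_2^n$, the coordinate-wise count $\sum_{c,c' \in \cC} d(c,c') = 2 \sum_{i=1}^n a_i(M - a_i)$ together with Parseval's identity $\sum_u \chi_u(\cC)^2 = 2^n M$ and $\chi_0(\cC) = M$ rearranges to
\begin{equation}\label{eq:master}
\overline d(\cC) = \frac{n+1}{2} - \frac{2^{n-1}}{M} + \frac{1}{2M^2}\, T(\cC), \qquad T(\cC) := \sum_{|u| \ge 2} \chi_u(\cC)^2.
\end{equation}
Theorem \ref{thm:Alt_Sil} corresponds to the trivial slack $T(\cC) \ge 0$, so each of the three present inequalities will follow from a sharper lower bound on the tail $T(\cC)$.

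For odd $M$, each $\chi_u(\cC)$ is a sum of $M$ signs and therefore odd, so $\chi_u(\cC)^2 \ge 1$ for every $u$; summing over the $2^n - n - 1$ vectors of weight at least two gives $T(\cC) \ge 2^n - n - 1$, which upon substitution into \eqref{eq:master} recovers Theorem \ref{thm:Xia_Fu}. For $M \equiv 2 \pmod 4$, the coefficients are even so $\chi_u(\cC)^2 \in \{0, 4, 16, \dots\}$; my plan is to perform a mod-$4$ analysis, writing $\chi_u(\cC) = M - 2 N_u$ with $N_u := |\{c \in \cC \colon u \cdot c \text{ odd}\}|$, to show that $\chi_u(\cC)^2 \ge 4$ unless $\chi_u(\cC) = 0$, and then use a MacWilliams-type count---relying on the non-negativity and conservation of the dual distribution $B_j := \frac{1}{M^2} \sum_{|u|=j} \chi_u(\cC)^2$, $\sum_j B_j = 2^n/M$---to control the number of vanishing characters, producing the desired $T(\cC) \ge 2(2^n - 2n)$.

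The second inequality is the most delicate step, and it cannot follow from Parseval alone: its reformulation $\sum_{i=1}^n (M - 2 a_i)^2 \le 2^{n-1} M$, equivalently $M B_1 \le 2^{n-1}$, is strictly sharper than Parseval when $M < 2^{n-1}$. I would combine the non-negativity of $B_j$ at the high-weight indices (especially $j = n, n-1$) with the conservation $\sum_j B_j = 2^n/M$ as an explicit LP-dual combination; the ``half-space'' code $\{v \in \cF_2^n \colon v_n = 0\}$ saturates $\sum_i (M - 2 a_i)^2 = M^2 = 2^{n-1} M$ at $M = 2^{n-1}$, indicating this is the correct LP extremal. The third inequality then emerges from layering the odd-$M$ parity refinement on top of the second inequality's argument, adding the $\frac{2^{n-1} - n}{2 M^2}$ correction. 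The main obstacle is locating the explicit LP combination that establishes $M B_1 \le 2^{n-1}$; once this is in hand, parity and Lemma \ref{lem:complement_determ}-style bookkeeping handle the remaining cases.
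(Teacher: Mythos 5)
Your master identity and the odd-$M$ step are correct, and your framework (character sums over $\cF_2^n$) is just the Fourier-analytic face of the Delsarte linear program the paper works with: $B_j=\frac{1}{M^2}\sum_{|u|=j}\chi_u(\cC)^2$ makes the two languages interchangeable. But there is a genuine gap exactly where you flag one. The inequality $MB_1\le 2^{n-1}$ does \emph{not} follow from non-negativity of the $B_j$ together with the conservation law $\sum_j B_j=2^n/M$; those constraints alone only give $B_1\le 2^n/M-1$, i.e.\ the Alth\"ofer--Sillke bound you already have. The missing ingredient is a single primal constraint, $A_n\ge 0$. Since $P_n^n(k)=(-1)^k$, relation (\ref{dual_dual}) with $j=n$ gives $\sum_k(-1)^kB_k=\frac{2^n}{M}A_n$, hence
\begin{align*}
B_1\;\le\;\sum_{k\ \mathrm{odd}}B_k\;=\;\tfrac12\Bigl(\sum_k B_k-\sum_k(-1)^kB_k\Bigr)\;=\;\frac{2^{n-1}}{M}\bigl(1-A_n\bigr)\;\le\;\frac{2^{n-1}}{M}\,,
\end{align*}
the first inequality using $B_k\ge0$ for odd $k\ge3$. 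In your language: $\sum_{|u|\ \mathrm{odd}}\chi_u(\cC)^2=2^{n-1}M(1-A_n)$ because $\sum_u(-1)^{|u|}(-1)^{u\cdot v}$ vanishes unless $v$ is the all-ones word. This is precisely the paper's Example \ref{exmple:2}, with LP certificate $\lambda(x)=-\tfrac12+\tfrac12P_n^n(x)$; your guess that the relevant constraints are $B_n,B_{n-1}\ge0$ points at the wrong side of the program. Once this is in place, the third inequality follows as you say by inserting $B_k\ge\frac{1}{M^2}\binom{n}{k}$ for odd $k\ge3$, which contributes the extra $\frac{1}{M^2}(2^{n-1}-n)$.

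A second caution concerns $M\equiv2\pmod4$: knowing $\chi_u(\cC)^2\in\{0,4,16,\dots\}$ is not enough, and the conservation law cannot by itself ``control the number of vanishing characters.'' What saves the argument is structure: if $s$ denotes the coordinatewise mod-$2$ sum of all codewords, then $N_u\equiv u\cdot s\pmod 2$, so $\chi_u(\cC)\equiv2\pmod4$, and hence $\chi_u(\cC)^2\ge4$, exactly for the $u$ in the hyperplane $s^{\perp}$ (or for all $u$ if $s=0$). Counting the vectors of weight at least $2$ in $s^{\perp}$ in the worst case $wt(s)=1$ gives $T(\cC)\ge4(2^{n-1}-n)=2(2^n-2n)$, as required; this is exactly the quoted constraint $B_i\ge\frac{2}{M^2}\bigl(\binom{n}{i}+P_i^n(\ell)\bigr)$ of \cite{BBMOS} with $\ell=wt(s)$, which is how the paper accounts for this case.
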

\noindent Using Lemma \ref{lem:complement_determ} and Theorems \ref{thm:Xia_Fu}, \ref{thm:Fu_Wei_Yeu} the following values of $\beta(n,M)$ were determined:~\\ $\beta(n,2^{n-1}\pm1),$ $\beta(n,2^{n-1}\pm2),$ $\beta(n,2^{n-2}),$ $\beta(n,2^{n-2}\pm1),$ $\beta(n,2^{n-1}+2^{n-2}),$ $\beta(n,2^{n-1}+2^{n-2}\pm1).$
The bounds in Theorems \ref{thm:Xia_Fu}, \ref{thm:Fu_Wei_Yeu} were obtained by considering constraints on distance distribution of codes which were developed by Delsarte in \cite{Del1}. We will recall these constraints in the next section.

Notice that the previous bounds are only useful when $M$ is at least of size about $2^{n-1}/n.$ Ahlswede and Alth{\"{o}}fer determined $\beta(n,M)$ asymptotically.
\begin{theorem}\label{Ahl_Alt}\cite{AhlAlt}
Let $\{M_n\}_{n=1}^\infty$ be a sequence of natural numbers with $0\leq M_n\leq 2^n$ for all $n$ and $\displaystyle{\lim_{n\to \infty}\inf\left(M_n/\binom{n}{\lfloor \alpha n\rfloor}\right)>0}$ for some constant $\alpha,$ $0<\alpha<1/2.$ Then
\begin{align*}
\lim_{n\to \infty}\inf\frac{\beta(n,M_n)}{n}\geq 2\alpha(1-\alpha)~.
\end{align*}
\end{theorem}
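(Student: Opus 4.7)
The plan is to combine the standard positional formula for $\od(\cC)$ with Shannon's subadditivity and a scalar convexity observation. Let $\cC$ be an $(n,M_n)$ code and, for each $i\in\{1,\ldots,n\}$, let $p_i$ denote the fraction of codewords whose $i$-th coordinate equals $1$. Splitting the Hamming distance into coordinate contributions and averaging gives
\begin{align*}
\od(\cC)=\frac{1}{M_n^2}\sum_{c,c'\in\cC}\sum_{i=1}^n\mathbf{1}[c_i\ne c_i']=2\sum_{i=1}^n p_i(1-p_i),
\end{align*}
so it suffices to lower bound $\sum_i p_i(1-p_i)$.

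Next I would exploit the size hypothesis via entropy. Letting $X$ denote a uniformly random codeword, Shannon's subadditivity gives $\log_2 M_n=H(X)\le\sum_{i=1}^n H(X_i)=\sum_{i=1}^n h(p_i)$, where $h$ is the binary entropy function. By Stirling's formula, $\log_2\binom{n}{\lfloor\alpha n\rfloor}=n\,h(\alpha)-O(\log n)$, and the hypothesis $\liminf M_n/\binom{n}{\lfloor\alpha n\rfloor}>0$ gives $\log_2 M_n\ge n\,h(\alpha)-O(\log n)$ for all large $n$. Dividing by $n$,
\begin{align*}
\frac{1}{n}\sum_{i=1}^n h(p_i)\ge h(\alpha)-o(1).
\end{align*}

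The crux of the argument is a scalar inequality linking $p(1-p)$ to $h(p)$. Since both are symmetric about $1/2$, one may define $f:[0,1]\to[0,1/4]$ by $f(t)=q(1-q)$, where $q\in[0,1/2]$ is the unique root of $h(q)=t$; then $p(1-p)=f(h(p))$ on all of $[0,1]$. A direct computation gives $f'(t)=(1-2q)/\log_2\bigl((1-q)/q\bigr)$, which is positive on $(0,1)$. I expect the main technical step to be verifying that $f$ is convex on $[0,1]$: after the substitution $u=(1-q)/q\ge 1$, this reduces to showing $\phi(u):=u-u^{-1}-2\ln u\ge 0$ for $u\ge 1$, which holds since $\phi(1)=0$ and $\phi'(u)=(u-1)^2/u^2\ge 0$.

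Finally, Jensen's inequality applied to the convex $f$, together with its monotonicity and continuity, yields
\begin{align*}
\sum_{i=1}^n p_i(1-p_i)=\sum_{i=1}^n f(h(p_i))\ge n\,f\!\left(\tfrac{1}{n}\sum_{i=1}^n h(p_i)\right)\ge n\,f\!\bigl(h(\alpha)-o(1)\bigr)=n\alpha(1-\alpha)-o(n),
\end{align*}
so that $\beta(n,M_n)\ge 2n\,\alpha(1-\alpha)-o(n)$. Dividing by $n$ and taking $\liminf$ as $n\to\infty$ gives the claimed bound.
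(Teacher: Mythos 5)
Your argument is correct, but note that there is nothing in the paper to compare it against: Theorem \ref{Ahl_Alt} is quoted from Ahlswede and Alth\"{o}fer \cite{AhlAlt} and the paper gives no proof of it (indeed it is the one bound in the introduction not rederived by the linear-programming machinery of Sections \ref{Sec:Large} and \ref{Sec:Small}). Taken on its own terms, your proof is complete and each step checks out. The coordinatewise identity $\od(\cC)=2\sum_{i=1}^n p_i(1-p_i)$ is the standard starting point; the subadditivity bound $\log_2 M_n\le\sum_i h(p_i)$ together with $\log_2\binom{n}{\lfloor\alpha n\rfloor}=nh(\alpha)-O(\log n)$ correctly converts the cardinality hypothesis into a lower bound on the average entropy; and the crux, convexity of $f=(\text{variance})\circ h^{-1}$ on $[0,1]$, does reduce (after writing $1-2q=(u-1)/(u+1)$ and $q(1-q)=u/(1+u)^2$ with $u=(1-q)/q$) exactly to $u-u^{-1}\ge 2\ln u$ for $u\ge 1$, which your $\phi$ handles; Jensen then points in the right direction because $f$ is convex and increasing, and continuity of $f$ at the interior point $h(\alpha)$ absorbs the $o(1)$ error. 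This entropy-plus-convexity route is attractive because it is short, self-contained, and immediately quantifiable (the error term is $O(\log n/n)$ after normalization), whereas the original source proceeds by a more combinatorial analysis of extremal configurations; your version would serve perfectly well as a proof of Theorem \ref{Ahl_Alt} if the paper wished to include one.
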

\noindent The bound of Theorem \ref{Ahl_Alt} is asymptotically achieved by taking constant weight code~\\ $\cC=\{x\in \cF_2^n:~wt(x)=\lfloor \alpha n\rfloor\}.$

The rest of the paper is organized as follows. In Section \ref{Sec:Prelim} we give necessary background in linear programming approach for deriving bounds for codes. This includes Delsarte's inequalities on distance distribution of a code and some properties of binary Krawtchouk polynomials. In Section \ref{Sec:Large} we obtain lower bounds on $\beta(n,M)$ which are useful in case when $M$ is relatively large. In particular, we show that the bound of Theorem \ref{thm:Alt_Sil} is derived via linear programming technique. We also improve some bounds from Theorem \ref{thm:Fu_Wei_Yeu} for $M<2^{n-2}.$ In Section \ref{Sec:Small}, we obtain new lower bounds on $\beta(n,M)$ which are useful when $M$ is at least of size about $n/3.$ We also prove that these bounds are asymptotically tight for the case $M=2n.$ Finally, in Section \ref{Sec:Recursive}, we give new recursive inequality for $\beta(n,M).$

\section{Preliminaries}\label{Sec:Prelim}

The distance distribution of an $(n,M)$ code $\cC$ is the $(n+1)$-tuple of rational numbers $\{A_0,A_1,\cdots,A_n\},$ where
\begin{align*}
A_i=\frac{|\{(c,c')\in \cC\times \cC:~d(c,c')=i\}|}{M}
\end{align*}
is the average number of codewords which are at distance $i$ from any given codeword $c\in \cC.$ It is clear that
\begin{align}\label{A_i_identities}
A_0=1~,~~\sum_{i=0}^nA_i=M~~\textrm{and}~~A_i\geq 0~~\textrm{for}~~0\leq i\leq n~.
\end{align}
If $\cC$ is an $(n,M)$ code with distance distribution $\{A_i\}_{i=0}^n,$ the dual distance distribution $\{B_i\}_{i=0}^n$ is defined by
\begin{align}\label{B_k_def}
B_k=\frac{1}{M}\sum_{i=0}^nP_k^n(i)A_i~,
\end{align}
where 
\begin{align}\label{Krawtchouk_def}
P_k^n(i)=\sum_{j=0}^k(-1)^j\binom{i}{j}\binom{n-i}{k-j}
\end{align}
is the binary Krawtchouk polynomial of degree $k$. It was proved by Delsarte \cite{Del1} that 
\begin{align}\label{Delsarte-Transf}
B_k\geq 0~~\textrm{for}~~0\leq k\leq n~.
\end{align}
Since the Krawtchouk polynomials satisfy the following orthogonal relation
\begin{align}\label{prop_5}
\sum_{k=0}^nP_k^n(i)P_j^n(k)=\delta_{ij}2^n~,
\end{align}
we have 
\begin{align}\label{dual_dual}
\sum_{k=0}^nP_j^n(k)B_k=\sum_{k=0}^nP_j^n(k)\frac{1}{M}\sum_{i=0}^nP_k^n(i)A_i=\frac{1}{M}\sum_{i=0}^nA_i\sum_{k=0}^nP_j^n(k)P_k^n(i)
=\frac{2^n}{M}A_j~.
\end{align}
It's easy to see from (\ref{A_i_identities}),(\ref{B_k_def}),(\ref{Krawtchouk_def}), and (\ref{dual_dual}) that
\begin{align}\label{B_i_identities}
B_0=1~~\textrm{and}~~\sum_{k=0}^nB_k=\frac{2^n}{M}~.
\end{align}

Before we proceed, we list some of the properties of binary Krawtchouk polynomials (see for example \cite{KrLi}).
\begin{itemize}
\item Some examples are: $P_0^n(x)\equiv 1,~P_1^n(x)=n-2x~,$
\begin{align*}
P_2^n(x)=\frac{(n-2x)^2-n}{2},~P_3^n(x)=\frac{(n-2x)((n-2x)^2-3n+2)}{6}~.
\end{align*}
\item For any polynomial $f(x)$ of degree $k$ there is the unique Krawtchouk expansion
\begin{align*}
f(x)=\sum_{i=0}^kf_iP_i^n(x)~,
\end{align*}
where the coefficients are
\begin{align*}
f_i=\frac{1}{2^n}\sum_{j=0}^nf(j)P_j^n(i)~.
\end{align*}
\item Krawtchouk polynomials satisfy the following recurrent relations:
\begin{align}\label{prop_1}
P_{k+1}^{n}(x)=\frac{(n-2x)P_{k}^{n}(x)-(n-k+1)P_{k-1}^{n}(x)}{k+1}~,
\end{align}
\begin{align}\label{prop_1-a}
P_{k}^{n}(x)=P_{k}^{n-1}(x)+P_{k-1}^{n-1}(x)~.
\end{align}
\item Let $i$ be nonnegative integer, $0\leq i\leq n.$ The following symmetry relations hold: 
\begin{align}\label{prop_2}
\binom{n}{i}P_k^n(i)=\binom{n}{k}P_i^n(k)~,
\end{align}
\begin{align}\label{prop_4}
P_k^n(i)=(-1)^iP_{n-k}^n(i)~.
\end{align}
\end{itemize}

\section{Bounds for ``large'' codes}\label{Sec:Large}

The key observation for obtaining the bounds in Theorems \ref{thm:Xia_Fu}, \ref{thm:Fu_Wei_Yeu} is the following result.
\begin{lemma}\label{key_observ}\cite{XiaFu}
For an arbitrary $(n,M)$ code $\cC$ the following holds:
\begin{align*}
\overline{d}(\cC)=\frac{1}{2}\left(n-B_1\right)~.
\end{align*}
\end{lemma}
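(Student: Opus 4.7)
The plan is a direct calculation from the definitions. First I would rewrite $\overline{d}(\cC)$ in terms of the distance distribution $\{A_i\}$. Grouping the double sum by distance value gives
\begin{align*}
\overline{d}(\cC) = \frac{1}{M^2}\sum_{i=0}^n i\cdot\abs{\{(c,c')\in\cC\times\cC : d(c,c')=i\}} = \frac{1}{M}\sum_{i=0}^n i A_i,
\end{align*}
by the definition of $A_i$.

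Next I would compute $B_1$ explicitly. From the listed examples of Krawtchouk polynomials, $P_1^n(i)=n-2i$, so the definition (\ref{B_k_def}) yields
\begin{align*}
B_1 = \frac{1}{M}\sum_{i=0}^n (n-2i) A_i = \frac{n}{M}\sum_{i=0}^n A_i - \frac{2}{M}\sum_{i=0}^n i A_i.
\end{align*}

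Invoking the identity $\sum_{i=0}^n A_i = M$ from (\ref{A_i_identities}), the first term simplifies to $n$, and substituting the expression for $\overline{d}(\cC)$ obtained above gives $B_1 = n - 2\overline{d}(\cC)$, which rearranges to the claimed formula. There is no real obstacle here: the lemma is essentially the observation that $P_1^n$ is a linear polynomial in $i$, so the ``first moment'' of the distance distribution coincides (up to the standard affine change) with $B_1$.
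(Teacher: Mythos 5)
Your computation is correct and complete: the paper itself states this lemma as a citation to Xia and Fu without reproducing a proof, and your argument (expressing $\overline{d}(\cC)$ as the first moment $\frac{1}{M}\sum_i iA_i$, then using $P_1^n(i)=n-2i$ and $\sum_i A_i=M$ to get $B_1=n-2\overline{d}(\cC)$) is precisely the standard derivation. Nothing is missing.
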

\noindent From Lemma \ref{key_observ} follows that any upper bound on $B_1$ will provide a lower bound on $\beta(n,M).$ We will obtain upper bounds on $B_1$ using linear programming technique.

Consider the following linear programming problem:
\vskip 3mm
~~~~~~~maximize~~$B_1$
\vskip 3mm
~~~~~~~subject to
\begin{align*}
\sum_{i=1}^nB_i=\frac{2^n}{M}-1~,
\end{align*}
\begin{align*}
\sum_{i=1}^nP_k^n(i)B_i\geq -P_k(0)~,~~1\leq k\leq n~,
\end{align*}

~~~~~~~and $B_i\geq 0$ for $1\leq i\leq n.$~\\
\noindent Note that the constraints are obtained from (\ref{dual_dual}) and (\ref{B_i_identities}).

The next theorem follows from the dual linear program. We will give an independent proof.
\newpage
\begin{theorem}\label{thm:large}
Let $\cC$ be an $(n,M)$ code such that for $2\leq i\leq n$ and $1\leq j\leq n$ there holds that $B_i\ne 0\Leftrightarrow i\in I$ and 
$A_j\ne 0\Leftrightarrow j\in J.$

Suppose a polynomial $\lambda(x)$ of degree at most $n$ can be found with the following properties. If the Krawtchouk expansion of $\lambda(x)$ is
\begin{align*}
\lambda(x)=\sum_{j=0}^n\lambda_jP_j^n(x)~,
\end{align*}
then $\lambda(x)$ should satisfy
\begin{align*}
\lambda(1)=-1~,~~~~~~\\
\lambda(i)\leq 0~~\textrm{for}~~i\in I~,~\\
\lambda_j\geq 0~~\textrm{for}~~j\in J~.~
\end{align*}
Then
\begin{align}\label{main_ineq_B_1}
B_1\leq \lambda(0)-\frac{2^n}{M}\lambda_0~.
\end{align}
The equality in (\ref{main_ineq_B_1}) holds iff $\lambda(i)=0$ for $i\in I$ and $\lambda_j=0$ for $j\in J.$
\end{theorem}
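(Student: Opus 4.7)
The plan is to evaluate the sum $\sum_{k=0}^{n}\lambda(k)B_k$ in two different ways and compare, which is exactly the standard linear programming duality argument behind Delsarte-type bounds.

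First I would expand $\lambda$ in its Krawtchouk basis and apply the dual-distance identity \eqref{dual_dual}:
\begin{align*}
\sum_{k=0}^{n}\lambda(k)B_k
=\sum_{k=0}^{n}B_k\sum_{j=0}^{n}\lambda_j P_j^n(k)
=\sum_{j=0}^{n}\lambda_j\sum_{k=0}^{n}P_j^n(k)B_k
=\frac{2^n}{M}\sum_{j=0}^{n}\lambda_j A_j.
\end{align*}
Splitting off $j=0$ and using $A_0=1$ together with the hypothesis that $A_j=0$ for $j\notin J\cup\{0\}$, this right-hand side equals
\begin{align*}
\frac{2^n}{M}\lambda_0+\frac{2^n}{M}\sum_{j\in J}\lambda_j A_j
\;\ge\;\frac{2^n}{M}\lambda_0,
\end{align*}
since $\lambda_j\ge 0$ for $j\in J$ and $A_j\ge 0$.

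Second, I would evaluate the same sum directly, splitting off $k=0$ and $k=1$ and using $B_0=1$, $\lambda(1)=-1$, and the assumption that $B_k=0$ for $k\notin I\cup\{0,1\}$:
\begin{align*}
\sum_{k=0}^{n}\lambda(k)B_k
=\lambda(0)-B_1+\sum_{k\in I}\lambda(k)B_k
\;\le\;\lambda(0)-B_1,
\end{align*}
because $\lambda(k)\le 0$ for $k\in I$ and $B_k\ge 0$ by \eqref{Delsarte-Transf}. Chaining the two estimates gives $\tfrac{2^n}{M}\lambda_0\le\lambda(0)-B_1$, which is exactly \eqref{main_ineq_B_1}.

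For the equality clause, note that both estimates above become equalities iff the two discarded non-negative sums $\sum_{k\in I}(-\lambda(k))B_k$ and $\tfrac{2^n}{M}\sum_{j\in J}\lambda_j A_j$ each vanish; since $B_k>0$ for $k\in I$ and $A_j>0$ for $j\in J$, this forces $\lambda(k)=0$ on $I$ and $\lambda_j=0$ on $J$. There is no real obstacle here beyond bookkeeping on the index sets $I,J$ and making sure the $k=1$ term is isolated before invoking the sign conditions on $\lambda(k)$; the whole argument is just the weak duality inequality for the linear program stated just before the theorem.
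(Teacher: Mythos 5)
Your proof is correct and is essentially the same argument as the paper's: both evaluate $\sum_k\lambda(k)B_k$ via the identity \eqref{dual_dual} and the sign conditions on $\lambda(i)$ and $\lambda_j$, the only difference being that you split off the $k=0$ and $k=1$ terms up front while the paper starts from $-B_1=\lambda(1)B_1$ and carries the $i=0$ correction term $-P_k^n(0)P_0^n(j)$ through one long chain of (in)equalities. Your handling of the equality clause also matches the paper's intent, using that $B_k\neq 0$ on $I$ and $A_j\neq 0$ on $J$ by the definition of these index sets.
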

\begin{proof}
Let $\cC$ be an $(n,M)$ code which satisfies the above conditions. Thus, using (\ref{A_i_identities}), (\ref{B_k_def}), (\ref{Delsarte-Transf}) and (\ref{prop_5}), we have
\begin{align*}
-B_1=\lambda(1)B_1\geq \lambda(1)B_1+\sum_{i\in I}\lambda(i)B_i=\sum_{i=1}^n\lambda(i)B_i=\sum_{i=1}^n\lambda(i)\frac{1}{M}\sum_{j=0}^nP_i^n(j)A_j
\end{align*}
\begin{align*}
=\frac{1}{M}\sum_{j=0}^nA_j\sum_{i=1}^n\lambda(i)P_i^n(j)=\frac{1}{M}\sum_{j=0}^nA_j\sum_{i=1}^n\sum_{k=0}^n\lambda_kP_k^n(i)P_i^n(j)
\end{align*}
\begin{align*}
=\frac{1}{M}\sum_{j=0}^nA_j\sum_{k=0}^n\lambda_k\left(\sum_{i=0}^nP_k^n(i)P_i^n(j)-P_k^n(0)P_0^n(j)\right)=\frac{1}{M}\sum_{j=0}^nA_j\sum_{k=0}^n\lambda_k\delta_{kj}2^n
\end{align*}
\begin{align*}
-\frac{1}{M}\sum_{j=0}^nA_j\sum_{k=0}^n\lambda_kP_k^n(0)
=\frac{2^n}{M}\sum_{j=0}^n\lambda_jA_j-\lambda(0)=\frac{2^n}{M}\left(\lambda_0A_0+\sum_{j\in J}^n\lambda_jA_j\right)-\lambda(0)
\end{align*}
\begin{align*}
\geq \frac{2^n}{M}\lambda_0A_0-\lambda(0)=\frac{2^n}{M}\lambda_0-\lambda(0)~.
\end{align*}
\end{proof}

\newpage
\begin{corollary}\label{cor:first_bound}
If $\displaystyle{\lambda(x)=\sum_{j=0}^n\lambda_jP_j^n(x)}$ satisfies
\begin{enumerate}
\item $\lambda(1)=-1,$ $\lambda(i)\leq 0$ for $2\leq i\leq n,$
\item $\lambda_j\geq 0$ for $1\leq j\leq n,$
\end{enumerate}
then 
\begin{align*}
\beta(n,M)\geq \frac{1}{2}\left(n-\lambda(0)+\frac{2^n}{M}\lambda_0\right)~.
\end{align*}
\end{corollary}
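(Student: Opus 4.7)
The plan is to derive Corollary \ref{cor:first_bound} directly from Theorem \ref{thm:large} combined with Lemma \ref{key_observ}. The observation is that the hypotheses of the corollary are universal versions of the hypotheses of the theorem: whereas Theorem \ref{thm:large} asks only that $\lambda(i)\leq 0$ for $i$ in the support $I\subseteq\{2,\ldots,n\}$ of the dual distance distribution and $\lambda_j\geq 0$ for $j$ in the support $J\subseteq\{1,\ldots,n\}$ of the distance distribution, the corollary imposes these inequalities for \emph{all} indices in the respective ranges. Consequently, the polynomial $\lambda(x)$ automatically satisfies the hypotheses of Theorem \ref{thm:large} for \emph{every} $(n,M)$ code, regardless of which distance distribution it happens to have.

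First I would fix an arbitrary $(n,M)$ code $\cC$ and identify its sets $I$ and $J$ as in Theorem \ref{thm:large}. Since $I\subseteq\{2,\ldots,n\}$ and $J\subseteq\{1,\ldots,n\}$, conditions (1) and (2) of the corollary imply $\lambda(i)\leq 0$ for $i\in I$ and $\lambda_j\geq 0$ for $j\in J$; together with $\lambda(1)=-1$ this exactly matches the assumptions of the theorem. Applying the theorem yields
\begin{align*}
B_1\leq \lambda(0)-\frac{2^n}{M}\lambda_0~.
\end{align*}

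Next I would invoke Lemma \ref{key_observ}, which gives $\overline{d}(\cC)=\tfrac{1}{2}(n-B_1)$, and substitute the upper bound on $B_1$ above to obtain
\begin{align*}
\overline{d}(\cC)\geq \frac{1}{2}\left(n-\lambda(0)+\frac{2^n}{M}\lambda_0\right)~.
\end{align*}
Since $\cC$ was an arbitrary $(n,M)$ code, taking the minimum over all such codes gives the claimed lower bound on $\beta(n,M)$.

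There is no real obstacle here; the corollary is essentially a packaging step that replaces the code-dependent sets $I,J$ appearing in Theorem \ref{thm:large} by the fixed index ranges $\{2,\ldots,n\}$ and $\{1,\ldots,n\}$, so that the resulting inequality holds uniformly over all $(n,M)$ codes. The only thing worth checking carefully is that the index ranges used in the two conditions match the support ranges allowed for $B_i$ (namely $i\geq 2$, since $B_1$ is being bounded) and for $A_j$ (namely $j\geq 1$, since $A_0=1$ is handled separately as the $\lambda_0 A_0$ term in the proof of Theorem \ref{thm:large}).
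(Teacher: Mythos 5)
Your proposal is correct and follows exactly the route the paper intends: the corollary is stated without a separate proof precisely because it is the specialization of Theorem \ref{thm:large} to arbitrary codes (the universal conditions subsume the code-dependent sets $I,J$), combined with Lemma \ref{key_observ} and minimization over all $(n,M)$ codes. Your remark about matching the index ranges ($i\geq 2$ for $B_i$, $j\geq 1$ for $A_j$) is exactly the right detail to check, and it checks out.
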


\begin{example}
Consider the following polynomial:
\begin{align*}
\lambda(x)\equiv -1~.
\end{align*}
\end{example}
\noindent It is obvious that the conditions of the Corollary \ref{cor:first_bound} are satisfied. Thus we have a bound
\begin{align*}
\beta(n,M)\geq \frac{n+1}{2}-\frac{2^{n-1}}{M}
\end{align*}
which coincides with the one from Theorem \ref{thm:Alt_Sil}.

\begin{example}\label{exmple:2}\cite[Theorem 4]{FuWeiYeu}
Consider the following polynomial:
\begin{align*}
\lambda(x)=-\frac{1}{2}+\frac{1}{2}P_n^n(x)~.
\end{align*}
\end{example}
\noindent From (\ref{prop_4}) we see that
\begin{align*}
P_n^n(i)=(-1)^iP_0^n(i)=\left\{ \begin{array}{c}
1~~~~\textrm{if}~i~\textrm{is even}\\
-1~~~\textrm{if}~i~\textrm{is odd}~,
\end{array} \right.
\end{align*}
and, therefore,
\begin{align*}
\lambda(i)=\left\{ \begin{array}{c}
0~~~~\textrm{if}~i~\textrm{is even}\\
-1~~~\textrm{if}~i~\textrm{is odd}~.
\end{array} \right.
\end{align*}
Furthermore, $\lambda_j=0$ for $1\leq j\leq n-1$ and $\lambda_n=1/2.$ Thus, the conditions of the Corollary \ref{cor:first_bound} are satisfied and we obtain
\begin{align*}
\beta(n,M)\geq \frac{1}{2}\left(n-\frac{2^{n-1}}{M}\right)=\frac{n}{2}-\frac{2^{n-2}}{M}~.
\end{align*}
This bound was obtained in \cite[Theorem 4]{FuWeiYeu} and is tight for $M=2^{n-1},2^{n-2}.$

Other bounds in Theorems \ref{thm:Xia_Fu}, \ref{thm:Fu_Wei_Yeu} were obtained by considering additional constraints on distance distribution coefficients given in the next theorem.
\begin{theorem}\cite{BBMOS}
Let $\cC$ be an arbitrary binary $(n,M)$ code. If $M$ is odd, then
\begin{align*}
B_i\geq \frac{1}{M^2}\binom{n}{i}~,~~0\leq i\leq n~.
\end{align*}
If $M\equiv 2(mod~4),$ then there exists an $\ell\in \{0,1,\cdots,n\}$ such that
\begin{align*}
B_i\geq \frac{2}{M^2}\left(\binom{n}{i}+P_i^n(\ell)\right)~,~~0\leq i\leq n~.
\end{align*}
\end{theorem}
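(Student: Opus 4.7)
The plan is to derive both bounds from a single character-theoretic formula for the dual distance. Let $\hat{\chi}_\cC(\bv) = \sum_{c \in \cC}(-1)^{\bv \cdot c}$ denote the Fourier coefficient of the indicator of $\cC$ at $\bv \in \cF_2^n$. A direct expansion of $\sum_{\bv:\,wt(\bv)=k}|\hat{\chi}_\cC(\bv)|^2$, grouping terms by $d = c+c'$ and invoking the identity $\sum_{\bv:\,wt(\bv)=k}(-1)^{\bv \cdot d} = P_k^n(wt(d))$, yields
\begin{align*}
M^2 B_k = \sum_{\bv \in \cF_2^n,~wt(\bv)=k} \bigl|\hat{\chi}_\cC(\bv)\bigr|^2~.
\end{align*}
Both parts of the theorem then reduce to producing lower bounds on individual Fourier coefficients by using divisibility constraints on $\hat{\chi}_\cC(\bv)$.

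For the first bound, write $\hat{\chi}_\cC(\bv) = M - 2 t_\bv$, where $t_\bv = |\{c \in \cC :\, \bv \cdot c = 1\}| \in \Z$. When $M$ is odd this integer is odd, so $|\hat{\chi}_\cC(\bv)|^2 \geq 1$ for every $\bv$. Summing over the $\binom{n}{k}$ vectors of weight $k$ gives $M^2 B_k \geq \binom{n}{k}$, i.e.\ the first inequality.

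For the second bound, the central step is to upgrade the mod-$2$ argument to a mod-$4$ one. Introduce the \emph{parity vector}
\begin{align*}
\mathbf{s} = \sum_{c \in \cC} c \pmod 2 \in \cF_2^n
\end{align*}
and observe that $t_\bv \equiv \sum_{c \in \cC}\bv \cdot c \equiv \bv \cdot \mathbf{s} \pmod 2$, since each coordinate sum collapses. Consequently, using $M \equiv 2 \pmod 4$,
\begin{align*}
\hat{\chi}_\cC(\bv) \equiv M - 2(\bv \cdot \mathbf{s}) \equiv 2 - 2(\bv \cdot \mathbf{s}) \pmod 4~.
\end{align*}
Thus whenever $\bv \cdot \mathbf{s} = 0$ the coefficient $\hat{\chi}_\cC(\bv)$ is congruent to $2$ modulo $4$; in particular it is nonzero and $|\hat{\chi}_\cC(\bv)|^2 \geq 4$.

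Set $\ell = wt(\mathbf{s})$. The identity $P_k^n(\ell) = \sum_{\bv:\,wt(\bv)=k}(-1)^{\bv \cdot \mathbf{s}}$ combined with $\binom{n}{k} = \sum_{\bv:\,wt(\bv)=k} 1$ yields, after a parity split,
\begin{align*}
\bigl|\{\bv :\, wt(\bv)=k,\, \bv \cdot \mathbf{s}=0\}\bigr| = \tfrac{1}{2}\bigl(\tbinom{n}{k}+P_k^n(\ell)\bigr)~.
\end{align*}
Restricting the character sum above to these $\bv$ produces $M^2 B_k \geq 2(\binom{n}{k}+P_k^n(\ell))$, as claimed. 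The main obstacle is conceptual: spotting that the ``right'' vector to distinguish vanishing Fourier coefficients is precisely the coordinatewise sum $\mathbf{s}$ of the codewords, so that $\ell = wt(\mathbf{s})$. Once this choice is made, the rest is a two-line mod-$4$ computation plus a Vandermonde-type count.
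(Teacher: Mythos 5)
The paper itself gives no proof of this theorem --- it is quoted from [BBMOS] as a known constraint --- so there is nothing internal to compare against. Your argument is correct and is essentially the classical one from that reference: the identity $M^2B_k=\sum_{wt(\bv)=k}\bigl|\sum_{c\in\cC}(-1)^{\bv\cdot c}\bigr|^2$ reduces both bounds to parity (mod $2$, resp.\ mod $4$) statements about the character sums $M-2t_\bv$, and your choice $\ell=wt(\mathbf{s})$ with $\mathbf{s}=\sum_{c\in\cC}c \bmod 2$, together with the count $\abs{\{\bv: wt(\bv)=k,\ \bv\cdot\mathbf{s}=0\}}=\tfrac12\bigl(\binom{n}{k}+P_k^n(\ell)\bigr)$, is exactly the right way to extract the $M\equiv 2 \pmod 4$ case. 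All steps check out.
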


Next, we will improve the bound of Example \ref{exmple:2} for $M<2^{n-2}.$

\begin{theorem}
For $n>2$
\begin{align*}
\beta(n,M)\geq \left\{ \begin{array}{c}
\frac{n}{2}-\frac{2^{n-2}}{M}+\frac{1}{n-2}\left(\frac{2^{n-2}}{M}-1\right)~~~~\textrm{if}~n~\textrm{is even}\\
 \\
\frac{n}{2}-\frac{2^{n-2}}{M}+\frac{1}{n-1}\left(\frac{2^{n-2}}{M}-1\right)~~~~\textrm{if}~n~\textrm{is odd}~.
\end{array} \right.
\end{align*}
\end{theorem}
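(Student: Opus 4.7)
My plan is to apply Corollary \ref{cor:first_bound} with the Krawtchouk polynomial ansatz
\begin{align*}
\lambda(x) = \alpha + \beta P_n^n(x) + \gamma P_{n-1}^n(x),
\end{align*}
where $\alpha\in\R$ and $\beta,\gamma\geq 0$. This extends the polynomial of Example \ref{exmple:2} (a combination of $P_0^n$ and $P_n^n$) by adjoining the term $P_{n-1}^n(x)$. The Krawtchouk expansion is immediate: $\lambda_0=\alpha$, $\lambda_{n-1}=\gamma$, $\lambda_n=\beta$, and all other coefficients vanish, so condition 2 of the corollary reduces to $\beta,\gamma\geq 0$.

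Using $P_0^n(x)\equiv 1$, $P_1^n(x)=n-2x$, and the symmetry (\ref{prop_4}) to obtain $P_n^n(x)=(-1)^x$ and $P_{n-1}^n(x)=(-1)^x(n-2x)$, one finds
\begin{align*}
\lambda(i)=\alpha+(-1)^i\bigl[\beta+\gamma(n-2i)\bigr].
\end{align*}
Since $\gamma\geq 0$, the maximum of $\lambda(i)$ over even $i\geq 2$ is attained at $i=2$, while the maximum over odd $i\in\{1,\dots,n\}$ is attained at the largest odd value not exceeding $n$, namely $n-1$ when $n$ is even and $n$ when $n$ is odd.

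I then determine $\alpha,\beta,\gamma$ by forcing the binding inequalities to be tight. For even $n$, I impose $\lambda(2)=0$, $\lambda(n-1)=0$, and $\lambda(1)=-1$; solving this $3\times 3$ linear system yields $\alpha=-\frac{n-3}{2(n-2)}$ and $\beta=\gamma=\frac{1}{2(n-2)}$, so $\lambda(0)=\frac{2}{n-2}$ and $\lambda_0=-\frac{n-3}{2(n-2)}$. For odd $n$, I impose $\lambda(2)=0$, $\lambda(n)=0$, and $\lambda(1)=-1$; the solution is $\alpha=-\frac{n-2}{2(n-1)}$, $\beta=\frac{1}{n-1}$, $\gamma=\frac{1}{2(n-1)}$, giving $\lambda(0)=\frac{2}{n-1}$ and $\lambda_0=-\frac{n-2}{2(n-1)}$. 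A short closed-form check (for instance, in the even-$n$ case $\lambda(i)=\frac{2-i}{n-2}$ for even $i$ and $\lambda(i)=\frac{i+1-n}{n-2}$ for odd $i$, with an analogous formula for odd $n$) confirms $\lambda(i)\leq 0$ on $\{1,\dots,n\}$ and $\beta,\gamma>0$, so both hypotheses of Corollary \ref{cor:first_bound} are satisfied.

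Substituting the computed $\lambda(0)$ and $\lambda_0$ into $\frac{1}{2}\bigl(n-\lambda(0)+\frac{2^n}{M}\lambda_0\bigr)$ and simplifying yields exactly the two stated expressions. The principal obstacle is not the computation but the correct guess of the ansatz: adjoining $P_{n-1}^n$ (rather than, say, $P_{n-2}^n$) is what breaks the forced equality $\lambda(0)=\lambda(n)$ that any polynomial supported on $\{P_0^n,P_n^n,P_{n-2}^n\}$ would inherit for even $n$, so that $\lambda(0)$ can be made strictly positive while $\lambda(n)$ remains nonpositive. The parity split arises because the location of the tight odd-index constraint depends on whether $n$ itself is odd.
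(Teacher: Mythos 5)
Your proposal is correct and arrives at exactly the polynomials the paper uses: for even $n$, $\lambda(x)=\frac{1}{2(n-2)}\bigl(3-n+P_{n-1}^n(x)+P_n^n(x)\bigr)$, and for odd $n$, $\lambda(x)=\frac{1}{2(n-1)}\bigl(2-n+P_{n-1}^n(x)+2P_n^n(x)\bigr)$, verified through Corollary \ref{cor:first_bound} in the same way (your closed forms $\lambda(i)=\frac{2-i}{n-2}$ for even $i$, $\frac{i+1-n}{n-2}$ for odd $i$, etc., match the paper's). The only difference is that you additionally explain how the ansatz and the coefficients are found by making the binding constraints tight, whereas the paper simply states the polynomials.
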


\begin{proof}
We distinguish between two cases.
\begin{itemize}
 \item If $n$ is even, $n>2,$ consider the following polynomial:
\begin{align*}
\lambda(x)=\frac{1}{2(n-2)}\left(3-n+P_{n-1}^n(x)+P_n^n(x)\right)~.
\end{align*}
Using (\ref{prop_4}), it's easy to see that
\begin{align*}
\lambda(i)=\left\{ \begin{array}{c}
\frac{2-i}{n-2}~~~~\textrm{if}~i~\textrm{is even}\\
\\
\frac{i+1-n}{n-2}~~~\textrm{if}~i~\textrm{is odd}~.
\end{array} \right.
\end{align*}
\item If $n$ is odd, $n>1,$ consider the following polynomial:
\begin{align*}
\lambda(x)=\frac{1}{2(n-1)}\left(2-n+P_{n-1}^n(x)+2P_n^n(x)\right)~.
\end{align*}
Using (\ref{prop_4}), it's easy to see that
\begin{align*}
\lambda(i)=\left\{ \begin{array}{c}
\frac{2-i}{n-1}~~~~\textrm{if}~i~\textrm{is even}\\
\\
\frac{i-n}{n-1}~~~\textrm{if}~i~\textrm{is odd}~.
\end{array} \right.
\end{align*}
\end{itemize}
\noindent In both cases, the claim of the theorem follows from Corollary \ref{cor:first_bound}.
\end{proof}

\section{Bounds for ``small'' codes}\label{Sec:Small}

We will use the following lemma, whose proof easily follows from (\ref{prop_5}).
\begin{lemma}\label{lem:polynom_duality}
Let $\displaystyle{\lambda(x)=\sum_{i=0}^n\lambda_iP_i^n(x)}$ be an arbitrary polynomial. A polynomial~\\ $\displaystyle{\alpha(x)=\sum_{i=0}^n\alpha_iP_i^n(x)}$ satisfies $\alpha(j)=2^n\lambda_j$ iff $\alpha_i=\lambda(i).$
\end{lemma}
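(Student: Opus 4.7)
The statement is a pure linear-algebra fact: it says that the two linear maps $(\lambda_i) \mapsto (\lambda(j))$ and $(\alpha(j)) \mapsto (\alpha_i)$ are mutually inverse up to the scalar $2^n$, so that requiring one side to equal $2^n$ times a given vector is equivalent to requiring the other side to equal that vector. The plan is therefore to invoke the Krawtchouk inversion formula $\alpha_i = \frac{1}{2^n}\sum_{j=0}^n \alpha(j) P_j^n(i)$, which the paper recorded in Section \ref{Sec:Prelim} among the properties of $P_k^n$ and which is itself an immediate consequence of the orthogonality relation (\ref{prop_5}).

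For the ``only if'' direction, I would assume $\alpha(j) = 2^n\lambda_j$ for every $j$ and substitute directly into the inversion formula, obtaining $\alpha_i = \frac{1}{2^n}\sum_{j=0}^n 2^n\lambda_j P_j^n(i) = \sum_{j=0}^n \lambda_j P_j^n(i) = \lambda(i)$, which is exactly the desired identity.

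For the ``if'' direction, I would assume $\alpha_i = \lambda(i)$ for every $i$ and expand $\alpha(j) = \sum_{i=0}^n \alpha_i P_i^n(j) = \sum_{i=0}^n \lambda(i) P_i^n(j)$. Applying the inversion formula to $\lambda$ itself (in the form $\sum_{i=0}^n \lambda(i) P_i^n(j) = 2^n\lambda_j$) identifies the last sum as $2^n\lambda_j$, giving $\alpha(j) = 2^n\lambda_j$.

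There is no real obstacle here: the only mild care required is in tracking indices through the orthogonality sum, but because the inversion formula has already been stated in the preliminaries, each direction collapses to a one-line calculation. The essential content of the lemma is simply that the Krawtchouk transform is, up to the normalization $2^n$, an involution; the two conditions are just the two ways of expressing this involution.
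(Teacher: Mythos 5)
Your proof is correct and follows essentially the same route as the paper, which gives no details beyond noting that the lemma ``easily follows from (\ref{prop_5})'': the inversion formula you invoke is precisely the consequence of that orthogonality relation already recorded in the preliminaries, and both directions reduce to the one-line substitutions you describe.
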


\noindent By substituting the polynomial $\lambda(x)$ from Theorem \ref{thm:large} into Lemma \ref{lem:polynom_duality}, we have the following.

\begin{theorem}\label{thm:small}
Let $\cC$ be an $(n,M)$ code such that for $1\leq i\leq n$ and $2\leq j\leq n$ there holds that $A_i\ne 0\Leftrightarrow i\in I$ and 
$B_j\ne 0\Leftrightarrow j\in J.$

Suppose a polynomial $\alpha(x)$ of degree at most $n$ can be found with the following properties. If the Krawtchouk expansion of $\alpha(x)$ is
\begin{align*}
\alpha(x)=\sum_{j=0}^n\alpha_jP_j^n(x)~,
\end{align*}
then $\alpha(x)$ should satisfy
\begin{align*}
\alpha_1=1~~,~~~~~~~~~~\\
\alpha_j\geq 0~~,~~\textrm{for}~~j\in J~,~\\
\alpha(i)\leq 0~~,~~\textrm{for}~~i\in I~.~~
\end{align*}
Then
\begin{align}\label{sec_main_ineq_B_1}
B_1\leq \frac{\alpha(0)}{M}-\alpha_0~.
\end{align}
The equality in (\ref{sec_main_ineq_B_1}) holds iff $\alpha(i)=0$ for $i\in I$ and $\alpha_j=0$ for $j\in J.$
\end{theorem}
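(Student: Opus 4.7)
The plan is to reduce Theorem \ref{thm:small} directly to Theorem \ref{thm:large} by means of the Krawtchouk self-duality recorded in Lemma \ref{lem:polynom_duality}. Starting from a polynomial $\alpha(x)=\sum_{j=0}^n\alpha_jP_j^n(x)$ satisfying the hypotheses of Theorem \ref{thm:small}, I will manufacture an auxiliary polynomial $\lambda(x)$ meeting the hypotheses of Theorem \ref{thm:large} for the same code $\cC$, and then simply translate the resulting bound on $B_1$.

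Concretely, I define $\lambda(x)$ by its Krawtchouk coefficients $\lambda_j:=-\alpha(j)/2^n$. Lemma \ref{lem:polynom_duality} then forces $\lambda(i)=-\alpha_i$ for every $0\leq i\leq n$, so in particular $\lambda(1)=-\alpha_1=-1$. The key observation is that the labels $I$ and $J$ swap roles between the two theorems: the set $I$ of Theorem \ref{thm:small} (the support of $\{A_i\}_{i\geq 1}$) is precisely the $J$ of Theorem \ref{thm:large}, while the set $J$ of Theorem \ref{thm:small} (the support of $\{B_j\}_{j\geq 2}$) is the $I$ of Theorem \ref{thm:large}. Under this swap, the hypothesis $\alpha_j\geq 0$ on the support of $B$ becomes $\lambda(j)\leq 0$ on the $I$ of Theorem \ref{thm:large}, while $\alpha(i)\leq 0$ on the support of $A$ becomes $\lambda_i\geq 0$ on the $J$ of Theorem \ref{thm:large}. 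Hence $\lambda$ satisfies every hypothesis of Theorem \ref{thm:large}.

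Applying Theorem \ref{thm:large} to $\lambda$ then yields
\begin{align*}
B_1\leq \lambda(0)-\frac{2^n}{M}\lambda_0=-\alpha_0+\frac{\alpha(0)}{M},
\end{align*}
which is exactly (\ref{sec_main_ineq_B_1}). For the equality clause, Theorem \ref{thm:large} asserts tightness iff $\lambda(i)=0$ on its $I$ and $\lambda_j=0$ on its $J$; via the identifications $\lambda(i)=-\alpha_i$ and $\lambda_j=-\alpha(j)/2^n$ these read $\alpha_i=0$ for $i$ in the $J$ of Theorem \ref{thm:small} and $\alpha(j)=0$ for $j$ in its $I$, which is the desired condition. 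The only real obstacle is notational bookkeeping — keeping track of which polynomial's values versus coefficients correspond to which of the code's two distance distributions under the Krawtchouk transform — and once the sign and the factor $2^n$ in the definition of $\lambda$ are chosen correctly, the estimate is immediate from Theorem \ref{thm:large}.
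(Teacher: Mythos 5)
Your proof is correct and follows the paper's own route: the paper derives Theorem \ref{thm:small} precisely by combining Theorem \ref{thm:large} with the Krawtchouk duality of Lemma \ref{lem:polynom_duality}, with the roles of $I$ and $J$ (equivalently, of values and coefficients) swapped exactly as you describe. Your write-up merely makes explicit the normalization $\lambda_j=-\alpha(j)/2^n$, $\lambda(i)=-\alpha_i$ that the paper leaves to the reader, and the resulting bound and equality condition check out.
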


\noindent Note that Theorem \ref{thm:small} follows from the dual linear program of the following one:
\vskip 3mm
~~~~~~~maximize~~$\displaystyle{\sum_{i=1}^nP_1^n(i)A_i=MB_1-n}$
\vskip 3mm
~~~~~~~subject to
\begin{align*}
\sum_{i=1}^nA_i=M-1~,
\end{align*}
\begin{align*}
\sum_{i=1}^nP_k^n(i)A_i\geq -P_k(0)~,~~1\leq k\leq n~,
\end{align*}

~~~~~~~and $A_i\geq 0$ for $1\leq i\leq n,$~\\
\noindent whose constraints are obtained from (\ref{A_i_identities}) and (\ref{Delsarte-Transf}).

\begin{corollary}\label{second_bound}
If $\displaystyle{\alpha(x)=\sum_{j=0}^n\alpha_jP_j^n(x)}$ satisfies
\begin{enumerate}
\item $\alpha_1=1,$ $\alpha_j\geq 0$ for $2\leq j\leq n,$
\item $\alpha(i)\leq 0$ for $1\leq i\leq n,$
\end{enumerate}
then 
\begin{align*}
\beta(n,M)\geq \frac{1}{2}\left(n+\alpha_0-\frac{\alpha(0)}{M}\right)~.
\end{align*}
\end{corollary}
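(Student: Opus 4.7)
The plan is to deduce this directly from Theorem \ref{thm:small} combined with Lemma \ref{key_observ}, since the corollary is essentially a ``code-independent'' specialization of the former. First I would fix an arbitrary $(n,M)$ code $\cC$ and let $I\subseteq\{1,\dots,n\}$ be the support of its distance distribution $\{A_i\}_{i\geq 1}$ and $J\subseteq\{2,\dots,n\}$ the support of $\{B_j\}_{j\geq 2}$. The hypotheses of the corollary are that $\alpha_1=1$, that $\alpha_j\geq 0$ for \emph{every} $j\in\{2,\dots,n\}$, and that $\alpha(i)\leq 0$ for \emph{every} $i\in\{1,\dots,n\}$. Since $J\subseteq\{2,\dots,n\}$ and $I\subseteq\{1,\dots,n\}$, these conditions imply $\alpha_j\geq 0$ for $j\in J$ and $\alpha(i)\leq 0$ for $i\in I$, i.e.\ exactly the hypotheses Theorem \ref{thm:small} demands of $\alpha(x)$ relative to $\cC$.

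Next I would invoke Theorem \ref{thm:small} to conclude that $B_1\leq \alpha(0)/M-\alpha_0$ for the fixed code $\cC$. Combining this with Lemma \ref{key_observ}, which states that $\overline{d}(\cC)=\tfrac{1}{2}(n-B_1)$, one obtains
\[
\overline{d}(\cC)\;\geq\;\tfrac{1}{2}\parenv{n+\alpha_0-\tfrac{\alpha(0)}{M}}.
\]
Since $\cC$ was an arbitrary $(n,M)$ code, taking the minimum of the left-hand side over all such codes yields the claimed lower bound on $\beta(n,M)$.

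There is no genuine obstacle in this argument: the corollary is just the specialization of Theorem \ref{thm:small} to hypotheses that are uniform in the code, followed by the one-line translation between $B_1$ and $\overline{d}(\cC)$ supplied by Lemma \ref{key_observ}. The only point worth double-checking is that the ``for all $i$'' and ``for all $j$'' quantifiers in the corollary really do subsume the code-dependent ``for $i\in I$'' and ``for $j\in J$'' quantifiers of Theorem \ref{thm:small}, which is immediate from the inclusions above.
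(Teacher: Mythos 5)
Your proof is correct and follows exactly the route the paper intends: the corollary is the code-independent specialization of Theorem \ref{thm:small} (the uniform hypotheses subsume the code-dependent sets $I$ and $J$), combined with Lemma \ref{key_observ} and a minimization over all $(n,M)$ codes. The paper treats this as immediate and gives no separate proof, so your write-up simply makes the same argument explicit.
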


\begin{example}
Consider
\begin{align*}
\alpha(x)=2-n+P_1^n(x)=2(1-x)~.
\end{align*}
\end{example}
\noindent It's obvious that the conditions of the Corollary \ref{second_bound} are satisfied and we obtain
\begin{theorem}\label{thm:1_zero}
\begin{align*}
\beta(n,M)\geq 1-\frac{1}{M}~.
\end{align*}
\end{theorem}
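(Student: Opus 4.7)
The plan is to apply Corollary \ref{second_bound} directly to the polynomial $\alpha(x) = 2(1-x)$ displayed just before the theorem statement. All the work reduces to (i) finding the Krawtchouk expansion of $\alpha(x)$, (ii) verifying the sign conditions on $\alpha_j$ and $\alpha(i)$, and (iii) plugging $\alpha(0)$ and $\alpha_0$ into the bound from the corollary. No genuine difficulty is expected: the content of the theorem is packaged entirely inside Corollary \ref{second_bound}, and the only real choice is of the witness polynomial.

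First I would compute the Krawtchouk coefficients of $\alpha(x)$. Using the two base cases $P_0^n(x)\equiv 1$ and $P_1^n(x)=n-2x$ from the preliminaries, one writes
\begin{align*}
\alpha(x) = 2-2x = (2-n)\cdot 1 + (n-2x) = (2-n)P_0^n(x) + P_1^n(x),
\end{align*}
so $\alpha_0 = 2-n$, $\alpha_1 = 1$, and $\alpha_j = 0$ for $2\leq j\leq n$. This gives hypothesis (1) of Corollary \ref{second_bound} immediately ($\alpha_1=1$, and $\alpha_j=0\geq 0$ for $j\geq 2$).

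Next I would verify hypothesis (2): for every integer $i$ with $1\leq i\leq n$, $\alpha(i)=2(1-i)\leq 0$, which is clear. Having confirmed the hypotheses, the corollary yields
\begin{align*}
\beta(n,M) \geq \frac{1}{2}\left(n + \alpha_0 - \frac{\alpha(0)}{M}\right) = \frac{1}{2}\left(n + (2-n) - \frac{2}{M}\right) = 1 - \frac{1}{M},
\end{align*}
which is precisely the claimed bound.

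The main ``obstacle'' is therefore conceptual rather than technical: recognizing that the cheapest nontrivial polynomial satisfying the corollary's constraints (linear, vanishing at $x=1$, normalized so that $\alpha_1=1$) already produces a nontrivial bound valid for all $M$ without any parity or divisibility restrictions. No estimates or case analysis are required beyond the one-line verifications above.
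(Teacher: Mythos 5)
Your proposal is correct and is exactly the paper's argument: the paper uses the same witness polynomial $\alpha(x)=2-n+P_1^n(x)=2(1-x)$ and invokes Corollary \ref{second_bound}, merely declaring the verification ``obvious'' where you spell it out. No differences in substance.
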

\noindent Note that the bound of Theorem \ref{thm:1_zero} is tight for $M=1,2.$

\begin{example}
Consider the following polynomial:
\begin{align*}
\alpha(x)=3-n+P_1^n(x)+P_n^n(x)~.
\end{align*}
\end{example}
\noindent From (\ref{prop_4}) we obtain
\begin{align*}
\alpha(i)=\left\{ \begin{array}{c}
4-2i~~~~\textrm{if}~i~\textrm{is even}\\
~2-2i~~~~\textrm{if}~i~\textrm{is odd}~.
\end{array} \right.
\end{align*}
\noindent Thus, conditions of the Corollary \ref{second_bound} are satisfied and we have
\begin{theorem}\label{thm:1_2_zeros}
\begin{align*}
\beta(n,M)\geq \frac{3}{2}-\frac{2}{M}~.
\end{align*}
\end{theorem}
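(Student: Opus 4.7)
The plan is to apply Corollary \ref{second_bound} directly to the polynomial $\alpha(x)=3-n+P_1^n(x)+P_n^n(x)$, so the task reduces to verifying the two hypotheses of the corollary and then substituting.

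First, I would read off the Krawtchouk expansion of $\alpha(x)$. Since the constant $3-n$ equals $(3-n)P_0^n(x)$ and $P_1^n$, $P_n^n$ are themselves Krawtchouk polynomials, one gets $\alpha_0=3-n$, $\alpha_1=1$, $\alpha_n=1$, and $\alpha_j=0$ for $2\le j\le n-1$. This immediately verifies condition~1 of Corollary~\ref{second_bound}: $\alpha_1=1$ and $\alpha_j\ge 0$ for all $2\le j\le n$.

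Next, I would check the pointwise sign condition. Using $P_1^n(x)=n-2x$ and the identity \eqref{prop_4}, which gives $P_n^n(i)=(-1)^i$, one finds
\begin{align*}
\alpha(i)=3-2i+(-1)^i=\begin{cases} 4-2i & \text{if } i \text{ is even},\\ 2-2i & \text{if } i \text{ is odd},\end{cases}
\end{align*}
as already noted in the excerpt just above the statement. For every $i\ge 1$ both expressions are $\le 0$ (with equality at $i=1,2$), so condition~2 holds.

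Finally, I would substitute into the conclusion of Corollary~\ref{second_bound}. From the expansion, $\alpha(0)=(3-n)+n+1=4$ and $\alpha_0=3-n$, so
\begin{align*}
\beta(n,M)\ge \frac{1}{2}\left(n+\alpha_0-\frac{\alpha(0)}{M}\right)=\frac{1}{2}\left(n+(3-n)-\frac{4}{M}\right)=\frac{3}{2}-\frac{2}{M}.
\end{align*}
There is no real obstacle here: the whole proof is the observation that the ansatz $3-n+P_1^n+P_n^n$ produces an $\alpha$ whose two leading "active" Krawtchouk coefficients are nonnegative while the $(-1)^i$ oscillation of $P_n^n$ is exactly what is needed to push $\alpha(1)$ down from $2$ to $0$ and keep $\alpha(i)\le 0$ for all $i\ge 1$. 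Guessing this combination is the only creative step.
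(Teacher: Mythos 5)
Your proposal is correct and is essentially identical to the paper's own argument: the paper also takes $\alpha(x)=3-n+P_1^n(x)+P_n^n(x)$, uses (\ref{prop_4}) to compute $\alpha(i)=4-2i$ for even $i$ and $2-2i$ for odd $i$, and then applies Corollary \ref{second_bound}. The only difference is that you spell out the coefficient bookkeeping ($\alpha_0=3-n$, $\alpha(0)=4$) a bit more explicitly than the paper does.
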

\noindent Note that the bound of Theorem \ref{thm:1_2_zeros} is tight for $M=2,4.$

\begin{example}\label{exmpl:pol_1_2_3_zeros}
Let $n$ be even integer. Consider the following polynomial:
\begin{align}\label{pol_1_2_3_zeros}
\alpha(x)=\frac{n(4-n)}{n+2}+P_1^n(x)+\frac{4\binom{n}{2}}{(n+2)\binom{n}{\frac{n}{2}+1}} P_{\frac{n}{2}+1}^n(x)~.
\end{align}
\end{example}
\noindent In this polynomial $\alpha_1=1$ and $\alpha_j\geq 0$ for $2\leq j\leq n$. Thus, condition 1 in Corollary \ref{second_bound} is satisfied.
From (\ref{prop_2}) we obtain that for nonnegative integer $i,$ $0\leq i\leq n,$
\begin{align*}
P_{\frac{n}{2}+1}^n(i)=\frac{\binom{n}{\frac{n}{2}+1}}{\binom{n}{i}}P_i^n\left(\frac{n}{2}+1\right)
\end{align*}
and, therefore,
\begin{align}\label{pol_1_2_3_zeros-a}
\alpha(i)=\frac{n(4-n)}{n+2}+P_1^n(i)+\frac{4\binom{n}{2}}{(n+2)\binom{n}{i}} P_i^n\left(\frac{n}{2}+1\right)~.
\end{align}
It follows from (\ref{prop_1}) that
\begin{align*}
P_1^n\left(\frac{n}{2}+1\right)=-2~,~~P_2^n\left(\frac{n}{2}+1\right)=\frac{4-n}{2}~,~~P_3^n\left(\frac{n}{2}+1\right)=n-2~,
\end{align*}
\begin{align}\label{values_P_i_n/2+1}
P_4^n\left(\frac{n}{2}+1\right)=\frac{(n-2)(n-8)}{8}~,~~P_5^n\left(\frac{n}{2}+1\right)=\frac{(n-2)(4-n)}{4}~.
\end{align}
Now it's easy to verify from (\ref{pol_1_2_3_zeros-a}) and (\ref{values_P_i_n/2+1}) that $\alpha(1)=\alpha(2)=\alpha(3)=0.$ We define
\begin{align*}
\widetilde{\alpha}(i):=\frac{n(4-n)}{n+2}+P_1^n(i)+\frac{4\binom{n}{2}}{(n+2)\binom{n}{i}}\left|P_i^n\left(\frac{n}{2}+1\right)\right|~.
\end{align*}
It is clear that $\alpha(i)\leq \widetilde{\alpha}(i)$ for $0\leq i\leq n.$ We will prove that $\widetilde{\alpha}(i)\leq 0$ for $4\leq i\leq n.$
From (\ref{prop_4}) and (\ref{values_P_i_n/2+1}) one can verify that
\begin{align}\label{values_at_end}
\widetilde{\alpha}(n)=0~,~~\widetilde{\alpha}(n-1)=\widetilde{\alpha}(n-2)=\frac{2n(4-n)}{n+2}~,~~\textrm{and}~~\widetilde{\alpha}(n-3)=2(6-n)
\end{align}
which implies that $\widetilde{\alpha}(n-j)\leq 0$ for $0\leq j\leq 3$ (of course, we are not interested in values $\widetilde{\alpha}(n-j),$ $0\leq j\leq 3,$ if $n-j\in \{1,2,3\}$). So, it is left to prove that for every integer $i,$ $4\leq i\leq n-4,$ $\widetilde{\alpha}(i)\leq 0.$ Note that for an integer $i,$ $4\leq i\leq n/2,$
\begin{align*}
\widetilde{\alpha}(n-i)=\frac{n(4-n)}{n+2}+P_1^n(n-i)+\frac{4\binom{n}{2}}{(n+2)\binom{n}{n-i}}\left|P_{n-i}^n\left(\frac{n}{2}+1\right)\right|
\end{align*}
\begin{align*}
=\frac{n(4-n)}{n+2}+(2i-n)+\frac{4\binom{n}{2}}{(n+2)\binom{n}{i}}\left|(-1)^{\frac{n}{2}+1}P_{i}^n\left(\frac{n}{2}+1\right)\right|
\end{align*}
\begin{align*}
\leq \frac{n(4-n)}{n+2}+(n-2i)+\frac{4\binom{n}{2}}{(n+2)\binom{n}{i}}\left|P_{i}^n\left(\frac{n}{2}+1\right)\right|=\widetilde{\alpha}(i)~.
\end{align*}
Therefore, it is enough to check that $\widetilde{\alpha}(i)\leq 0$ only for $4\leq i\leq n/2.$

From (\ref{values_P_i_n/2+1}) we obtain that
\begin{align*}
\widetilde{\alpha}(4)=-2-\frac{6}{n-3}<0~~\textrm{and}~~\widetilde{\alpha}(5)=-4-\frac{12(n-8)}{(n+2)(n-3)}<0~,
\end{align*}
where, in view of (\ref{values_at_end}), we assume that $n\geq 8.$
To prove that $\widetilde{\alpha}(i)\leq 0$ for $6\leq i\leq n/2$ we will use the following lemma whose proof is given in the Appendix.
\begin{lemma}\label{lem:estimation}
If $n$ is an even positive integer and $i$ is an arbitrary integer number, $2\leq i\leq n/2,$ then
\begin{align*}
\left|P_i^n\left(\frac{n}{2}+1\right)\right|<\binom{n}{\lfloor \frac{i}{2}\rfloor}~.
\end{align*}
\end{lemma}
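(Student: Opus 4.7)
The plan is to exploit the generating function identity $\sum_{k=0}^n P_k^n(x) z^k = (1-z)^x(1+z)^{n-x}$. Setting $x = n/2+1$ yields
\begin{align*}
\sum_{k=0}^n P_k^n\!\left(\tfrac{n}{2}+1\right) z^k = (1-z)^{n/2+1}(1+z)^{n/2-1} = (1-z)^2(1-z^2)^{n/2-1}.
\end{align*}
Since $(1-z^2)^{n/2-1}$ is a polynomial in $z^2$, multiplying by $(1-2z+z^2)$ and reading off the coefficient of $z^i$ gives distinct closed forms according to the parity of $i$:
\begin{align*}
P_{2\ell}^n\!\left(\tfrac{n}{2}+1\right) &= (-1)^\ell\left[\binom{n/2-1}{\ell} - \binom{n/2-1}{\ell-1}\right], \\
P_{2\ell+1}^n\!\left(\tfrac{n}{2}+1\right) &= 2(-1)^{\ell+1}\binom{n/2-1}{\ell},
\end{align*}
with the convention $\binom{n/2-1}{-1}=0$. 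Note that $\lfloor i/2\rfloor = \ell$ in both parities, so the right-hand side of the lemma is in either case $\binom{n}{\ell}$.

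After this reduction the two cases are essentially one-line estimates. For an even index $i=2\ell$ with $1\le\ell\le n/4$, Pascal's identity gives
\begin{align*}
\left|P_{2\ell}^n\!\left(\tfrac{n}{2}+1\right)\right| \le \binom{n/2-1}{\ell} + \binom{n/2-1}{\ell-1} = \binom{n/2}{\ell} < \binom{n}{\ell},
\end{align*}
the last inequality being strict since $n/2 < n$ and $\ell \ge 1$. For an odd index $i=2\ell+1$ (which forces $\ell\ge 1$ since $i\ge 3$), I need $2\binom{n/2-1}{\ell} < \binom{n}{\ell}$, and this follows from writing
\begin{align*}
\frac{\binom{n}{\ell}}{\binom{n/2-1}{\ell}} = \prod_{j=0}^{\ell-1}\frac{n-j}{n/2-1-j},
\end{align*}
and observing that each factor strictly exceeds $2$ because $(n-j)-2(n/2-1-j) = j+2 > 0$; hence the product strictly exceeds $2^\ell \ge 2$.

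No step here is really hard; the main obstacle is merely the bookkeeping in the coefficient extraction and confirming that the boundary cases (the smallest value $\ell=1$ in the odd case, and indices near $\ell = n/4$ in the even case where the difference of binomials can vanish or flip sign) are correctly absorbed by taking absolute values up front. The appearance of $\lfloor i/2\rfloor$ on the right-hand side of the lemma is no accident: it is forced by the factorization $(1-z)^2(1-z^2)^{n/2-1}$, which essentially depends on $z^2$ and pairs the indices $2\ell$ and $2\ell+1$ through the common binomial $\binom{n/2-1}{\ell}$.
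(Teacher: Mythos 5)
Your proposal is correct, and it takes a genuinely different route from the paper. The paper proves the lemma by induction on $i$: it plugs $x=n/2+1$ into the three-term recurrence $P_{k+1}^{n}(x)=\bigl((n-2x)P_{k}^{n}(x)-(n-k+1)P_{k-1}^{n}(x)\bigr)/(k+1)$ (so that $n-2x=-2$), applies the triangle inequality and the induction hypothesis, and then splits into cases according to the parity of $k$ to massage the resulting combination of binomial coefficients below $\binom{n}{\lfloor (k+1)/2\rfloor}$. You instead evaluate $P_i^n(n/2+1)$ exactly via the generating function $\sum_k P_k^n(x)z^k=(1-z)^x(1+z)^{n-x}$, whose specialization factors as $(1-z)^2(1-z^2)^{n/2-1}$; I checked your closed forms against the explicit values $P_1^n,\dots,P_5^n$ at $n/2+1$ listed in the paper and they agree, and your two one-line estimates (Pascal plus $\binom{n/2}{\ell}<\binom{n}{\ell}$ in the even case, the termwise ratio bound in the odd case, where the positivity of the denominators $n/2-1-j$ is guaranteed by $\ell\le n/4$) are sound. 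Your argument is shorter, avoids the induction entirely, and actually proves more: in the even case you get the much stronger bound $\bigl|P_{2\ell}^n(n/2+1)\bigr|\le\binom{n/2}{\ell}$, and in the odd case $2\binom{n/2-1}{\ell}$, either of which could be fed into Lemma \ref{lem:monot_sequence} to weaken what needs to be checked there. The only thing the paper's inductive approach buys in exchange is that it uses nothing beyond the recurrence already quoted in Section \ref{Sec:Prelim}, whereas you invoke the generating function, which is standard (and is just a repackaging of the defining sum (\ref{Krawtchouk_def})) but is not among the listed properties; it would be worth stating it with a one-line justification.
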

By Lemma \ref{lem:estimation}, the following holds for $2\leq i\leq n/2.$
\begin{align*}
\widetilde{\alpha}(i)=\frac{n(4-n)}{n+2}+P_1^n(i)+\frac{4\binom{n}{2}}{(n+2)\binom{n}{i}}\left|P_i^n\left(\frac{n}{2}+1\right)\right|
\end{align*}
\begin{align*}
<\frac{n(4-n)}{n+2}+n-2i+\frac{4\binom{n}{2}\binom{n}{\lfloor \frac{i}{2}\rfloor}}{(n+2)\binom{n}{i}}
=\frac{6n}{n+2}-2i+\frac{4\binom{n}{2}\binom{n}{\lfloor \frac{i}{2}\rfloor}}{(n+2)\binom{n}{i}}
\end{align*}
\begin{align*}
=-\frac{12}{n+2}-2(i-3)+\frac{4\binom{n}{2}\binom{n}{\lfloor \frac{i}{2}\rfloor}}{(n+2)\binom{n}{i}}~.
\end{align*}
Thus, to prove that $\widetilde{\alpha}(i)\leq 0$ for $6\leq i\leq n/2,$ it's enough to prove that 
\begin{align*}
-2(i-3)+\frac{4\binom{n}{2}\binom{n}{\lfloor \frac{i}{2}\rfloor}}{(n+2)\binom{n}{i}}<0
\end{align*}
for $6\leq i\leq n/2.$

\begin{lemma}\label{lem:monot_sequence}
Let $n$ be an even integer. For $6\leq i\leq n/2$ we have
\begin{align*}
\frac{(i-3)\binom{n}{i}}{\binom{n}{\lfloor \frac{i}{2}\rfloor}}>\frac{n(n-1)}{n+2}~.
\end{align*}
\end{lemma}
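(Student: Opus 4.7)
The plan is to split the argument into a parity reduction followed by a simpler bound for even $i$. Writing $g(n,i) = (i-3)\binom{n}{i}/\binom{n}{\lfloor i/2\rfloor}$ and noting that $\lfloor(2k+1)/2\rfloor = \lfloor 2k/2\rfloor = k$, a direct computation gives
\[
\frac{g(n,2k+1)}{g(n,2k)} = \frac{2(k-1)(n-2k)}{(2k-3)(2k+1)}.
\]
For $k \ge 3$ and $2k+1 \le n/2$ (equivalently $n \ge 4k+2$), this ratio is at least its value at $n = 4k+2$, namely $4(k^{2}-1)/(4k^{2}-4k-3)$, which exceeds $1$. Hence the inequality for odd $i$ will follow from the inequality for even $i$, and I may restrict attention to $i = 2k$ with $3 \le k \le n/4$.

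For even $i = 2k$, I would use the factorization
\[
\frac{\binom{n}{2k}}{\binom{n}{k}} = \prod_{j=1}^{k}\frac{n-2k+j}{k+j}.
\]
Since $k \le n/4$ forces $n > 3k$, a short derivative check shows each factor is decreasing in $j$, so every factor is bounded below by the one at $j = k$, namely $(n-k)/(2k) \ge 3/2$. This produces the clean lower bound $\binom{n}{2k}/\binom{n}{k} \ge ((n-k)/(2k))^{k}$, which reduces the lemma to the inequality
\[
(2k-3)\left(\frac{n-k}{2k}\right)^{k} > \frac{n(n-1)}{n+2}.
\]

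I would then argue that the left-hand side of the last inequality, viewed as a function of $k$ on $\{3, 4, \dots, \lfloor n/4\rfloor\}$, is unimodal, so only the two endpoints need to be checked. At $k = 3$ the inequality reduces to the polynomial form $(n+2)(n-3)^{3} > 72\,n(n-1)$, which I would verify by evaluating the difference at $n = 12$ and checking that its derivative remains positive for $n \ge 12$. At $k = \lfloor n/4 \rfloor$ it takes the form $(n/2 - 3)(3/2)^{n/4} > n(n-1)/(n+2)$ (with a minor modification when $n \equiv 2 \pmod{4}$), where exponential growth on the left trivially dominates the linear right-hand side for $n \ge 12$.

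The main obstacle I foresee is the unimodality claim. I plan to establish it by showing that the continuous extension $\varphi(x) = \log(2x-3) + x\log((n-x)/(2x))$ is strictly concave on $[3, n/4]$; a direct computation yields
\[
\varphi''(x) = -\frac{4}{(2x-3)^{2}} - \frac{1}{n-x} - \frac{n}{(n-x)^{2}} - \frac{1}{x},
\]
in which every summand is negative, so $\varphi'' < 0$ throughout. Concavity of $\varphi$ forces the discrete sequence $k \mapsto (2k-3)((n-k)/(2k))^{k}$ to be unimodal, and hence its minimum on $\{3,\dots,\lfloor n/4\rfloor\}$ is attained at an endpoint, completing the proof.
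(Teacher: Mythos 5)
Your proposal is correct, but it follows a genuinely different route from the paper's. The paper works directly with the discrete sequence $a_i=(i-3)\binom{n}{i}\big/\binom{n}{\lfloor i/2\rfloor}$ and shows it is \emph{increasing} over the whole range: it bounds the ratios $a_{i+2}/a_i$ and $a_{i+1}/a_i$ below by $1$ for even $i$, so that the single base-case computation $a_6=(n-3)(n-4)(n-5)/40>n(n-1)/(n+2)$ (valid since the hypothesis forces $n\ge 12$) finishes the argument. Your odd-to-even reduction is essentially the paper's step $a_{2k+1}/a_{2k}>1$; where you diverge is in the even case, replacing $\binom{n}{2k}\big/\binom{n}{k}$ by the cruder closed form $\left((n-k)/(2k)\right)^{k}$ and invoking log-concavity in $k$ to reduce to the two endpoints $k=3$ and $k=\lfloor n/4\rfloor$. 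This is sound: log-concavity does place the minimum over $\{3,\dots,\lfloor n/4\rfloor\}$ at an endpoint, your second-derivative formula for $\varphi$ is correct and manifestly negative on $[3,n/4]$, the $k=3$ check amounts to $n^4-7n^3-63n^2+99n-54>0$ for $n\ge 12$ (true, with derivative positive and increasing there), and the right endpoint holds as well. The trade-off is that your route requires a continuous concavity computation and a second endpoint verification, both of which the paper avoids by proving the stronger statement that the sequence is monotone from the single base point $i=6$. One caution: the right-endpoint inequality is tighter than ``trivially dominates'' suggests in the smallest cases (at $n=12$ and $n=14$ the two endpoints coincide, and the $k=3$ inequality itself is only $10206>9504$ at $n=12$), so that step deserves an explicit induction on $n$ rather than an appeal to exponential growth alone.
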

\noindent The proof of this lemma appears in the Appendix.

We have proved that the both conditions of the Corollary \ref{second_bound} are satisfied and, therefore, for even integer $n,$ we have
\begin{align*}
\beta(n,M)\geq \frac{3n}{n+2}-\frac{n}{M}~.
\end{align*}

Once we have a bound for an even (odd) $n$, it's easy to deduce one for odd (even) $n$ due to the following fact which follows from (\ref{prop_1-a}).

\begin{lemma}\label{even-odd-relation}
Let $\displaystyle{\alpha(x)=\sum_{j=0}^n\alpha_jP_j^n(x)}$ be an arbitrary polynomial. Then for a polynomial
\begin{align*}
\mu(x)=\sum_{j=0}^{n-1}\mu_jP_j^{n-1}(x)~,
\end{align*}
where
\begin{align*}
\mu_j=\alpha_j+\alpha_{j+1}~,~~0\leq j\leq n-1~,
\end{align*}
the following holds:
\begin{align*}
\mu(x)=\alpha(x)~~\textrm{for}~~0\leq x\leq n-1~.
\end{align*}
\end{lemma}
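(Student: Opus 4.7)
The plan is to expand $\alpha(x)$ using the recurrence (\ref{prop_1-a}), which expresses each $P_j^n(x)$ as $P_j^{n-1}(x)+P_{j-1}^{n-1}(x)$, and then to regroup terms by their index in the $(n-1)$-expansion. Concretely, I write
\[
\alpha(x)=\sum_{j=0}^n\alpha_j P_j^n(x)=\sum_{j=0}^n\alpha_j P_j^{n-1}(x)+\sum_{j=0}^n\alpha_j P_{j-1}^{n-1}(x),
\]
with the convention $P_{-1}^{n-1}\equiv 0$. Shifting the index in the second sum ($k=j-1$) and combining coefficients for each $P_k^{n-1}(x)$ with $0\le k\le n-1$, the coefficient of $P_k^{n-1}(x)$ becomes exactly $\alpha_k+\alpha_{k+1}=\mu_k$. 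What remains is a single leftover term, namely $\alpha_n P_n^{n-1}(x)$, coming from the $j=n$ summand of the first sum. Thus
\[
\alpha(x)=\mu(x)+\alpha_n P_n^{n-1}(x).
\]

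So the lemma reduces to showing that $P_n^{n-1}(x)=0$ for every integer $x$ with $0\le x\le n-1$. This is the only step that requires any real thought, and it follows directly from the defining formula (\ref{Krawtchouk_def}). Indeed,
\[
P_n^{n-1}(x)=\sum_{j=0}^n(-1)^j\binom{x}{j}\binom{n-1-x}{n-j}.
\]
For integer $x$ in the range $0\le x\le n-1$, the factor $\binom{x}{j}$ forces $j\le x$, while $\binom{n-1-x}{n-j}$ forces $n-j\le n-1-x$, i.e.\ $j\ge x+1$. These two conditions are incompatible, so every term vanishes.

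The main (and only mild) obstacle is noticing that the naive application of (\ref{prop_1-a}) leaves behind this spurious $P_n^{n-1}(x)$ term; once one observes that this polynomial has zeros at all of $0,1,\dots,n-1$, the identity $\mu(x)=\alpha(x)$ on that range is immediate. No machinery beyond (\ref{prop_1-a}) and (\ref{Krawtchouk_def}) is needed, and no restriction on $\alpha$ (such as the Corollary~\ref{second_bound} sign conditions) enters.
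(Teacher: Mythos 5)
Your proof is correct and follows exactly the route the paper gestures at: the paper gives no proof of this lemma, merely asserting that it ``follows from (\ref{prop_1-a})''. You correctly identify and dispose of the one genuinely nontrivial point --- the leftover term $\alpha_n P_n^{n-1}(x)$ produced by the index shift, which vanishes at the integers $0,1,\dots,n-1$ (and only there, since $P_n^{n-1}$ is a nonzero degree-$n$ polynomial, so the restriction to integer $x$ is essential) --- which is precisely the detail the paper leaves implicit.
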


\begin{example}\label{exmpl:n_odd_pol_1_2_3_zeros}
Let $n$ be odd integer, $n>1.$ Consider the following polynomial:
\begin{align}\label{n-odd-pol_1_2_3_zeros}
\mu(x)=\frac{6+3n-n^2}{n+3}+P_1^n(x)+\frac{4\binom{n+1}{2}}{(n+3)\binom{n+1}{\frac{n+3}{2}}}\left(P_{\frac{n+1}{2}}^n(x)
+P_{\frac{n+3}{2}}^n(x)\right)
\end{align}
\end{example}
\noindent which is obtained from $\alpha(x)$ given in (\ref{pol_1_2_3_zeros}) by the construction of Lemma \ref{even-odd-relation}. Thus, by Corollary \ref{second_bound}, for odd integer $n,$ we have
\begin{align*}
\beta(n,M)\geq \frac{3(n+1)}{n+3}-\frac{n+1}{M}~.
\end{align*}

We summarize the bounds from the Examples \ref{exmpl:pol_1_2_3_zeros}, \ref{exmpl:n_odd_pol_1_2_3_zeros} in the next theorem.
\begin{theorem}\label{thm:1_2_3_zeros}
\begin{align*}
\beta(n,M)\geq \left\{ \begin{array}{c}
\frac{3n}{n+2}-\frac{n}{M}~~~~~~~\textrm{if}~n~\textrm{is even}\\
 \\
\frac{3(n+1)}{n+3}-\frac{n+1}{M}~~~~\textrm{if}~n~\textrm{is odd}~.
\end{array} \right.
\end{align*}
\end{theorem}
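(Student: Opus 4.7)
The plan is to apply Corollary \ref{second_bound} twice: once with the polynomial $\alpha(x)$ of Example \ref{exmpl:pol_1_2_3_zeros} to handle even $n$, and once with the polynomial $\mu(x)$ of Example \ref{exmpl:n_odd_pol_1_2_3_zeros} to handle odd $n$. The odd-$n$ polynomial is obtained from the even-$(n+1)$ polynomial via the transfer described in Lemma \ref{even-odd-relation}, so the essential work is in the even case.

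For even $n$, Condition 1 of Corollary \ref{second_bound} reads off directly from (\ref{pol_1_2_3_zeros}): among the Krawtchouk coefficients only $\alpha_0$, $\alpha_1$, and $\alpha_{n/2+1}$ are nonzero, and $\alpha_1 = 1$ while $\alpha_{n/2+1}>0$. Condition 2, the inequality $\alpha(i)\leq 0$ for $1\leq i\leq n$, is where the effort goes. I would first apply the symmetry (\ref{prop_2}) to rewrite $\alpha(i)$ as in (\ref{pol_1_2_3_zeros-a}), and use the closed forms (\ref{values_P_i_n/2+1}) to check $\alpha(1)=\alpha(2)=\alpha(3)=0$. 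Then I pass to the absolute-value relaxation $\widetilde{\alpha}$, dispose of the top boundary $i\in\{n-3,n-2,n-1,n\}$ using (\ref{values_at_end}), and exploit the parity identity $P_{n-k}^n(i)=(-1)^i P_k^n(i)$ to obtain $\widetilde{\alpha}(n-i)\leq \widetilde{\alpha}(i)$ for $i\leq n/2$. This reduces the remaining verification to $4\leq i\leq n/2$. The cases $i=4,5$ come straight from (\ref{values_P_i_n/2+1}), and the bulk range $6\leq i\leq n/2$ is where Lemma \ref{lem:estimation} is invoked to bound $|P_i^n(n/2+1)|$ by $\binom{n}{\lfloor i/2\rfloor}$, after which Lemma \ref{lem:monot_sequence} closes the inequality.

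Once both conditions hold, computing $\alpha_0=n(4-n)/(n+2)$ and $\alpha(0)=2n$ (using $P_1^n(0)=n$ and $P_{n/2+1}^n(0)=\binom{n}{n/2+1}$) and substituting into Corollary \ref{second_bound} gives $\beta(n,M)\geq 3n/(n+2)-n/M$.

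For odd $n$, I take $\mu(x)$ as in (\ref{n-odd-pol_1_2_3_zeros}); by construction, Lemma \ref{even-odd-relation} yields $\mu(x)=\alpha(x)$ on $\{0,1,\dots,n\}$ with $\alpha$ the even-$(n+1)$ polynomial, so the nonpositivity on $\{1,\dots,n\}$ is inherited from the even case. The transfer rule $\mu_j=\alpha_j+\alpha_{j+1}$ preserves $\mu_1=1$ and nonnegativity of the remaining coefficients, so Corollary \ref{second_bound} applies again; evaluating $\mu_0$ and $\mu(0)$ produces $\beta(n,M)\geq 3(n+1)/(n+3)-(n+1)/M$. The main obstacle throughout is Lemma \ref{lem:estimation}, the Krawtchouk magnitude bound at $n/2+1$, without which the bulk range in the even verification is inaccessible.
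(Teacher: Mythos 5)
Your proposal is correct and follows essentially the same route as the paper: the even case is exactly Example \ref{exmpl:pol_1_2_3_zeros} (symmetry reduction via (\ref{prop_2}) and (\ref{prop_4}), the relaxation $\widetilde{\alpha}$, boundary values (\ref{values_at_end}), the explicit checks at $i=4,5$, and Lemmas \ref{lem:estimation}, \ref{lem:monot_sequence} for $6\leq i\leq n/2$), and the odd case is the transfer of Lemma \ref{even-odd-relation} as in Example \ref{exmpl:n_odd_pol_1_2_3_zeros}. The computations $\alpha_0=n(4-n)/(n+2)$ and $\alpha(0)=2n$ and the resulting bounds agree with the paper.
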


\begin{example}
For $n\equiv1~(mod~4),~n\ne 1,$ consider
\begin{align}\label{odd-pol_1_2_3_4_zeros}
\alpha(x)=\frac{(1-n)(n-5)}{n+1}+P_1^n(x)+\frac{4n(n-2)}{(n+1)\binom{n}{\frac{n+1}{2}}} P_{\frac{n+1}{2}}^n(x)+P_n^n(x)~.
\end{align}
\end{example}
\noindent One can verify that
\begin{align*}
\alpha(0)=4(n-1)~,~~\alpha(1)=\alpha(2)=\alpha(3)=\alpha(4)=0~,~~\alpha(5)=\alpha(6)=\frac{4(1-n)}{n-4}~,
\end{align*}
and
\begin{align*}
\alpha(n)=-6\frac{(n-1)^2}{n+1}~,~~\alpha(n-1)=\alpha(n-2)=\alpha(n-3)=\alpha(n-4)=-2\frac{(n-5)(n-1)}{n+1}~,~\\
\alpha(n-5)=\alpha(n-6)=-\frac{2(n-9)(n-2)(n-1)}{(n+1)(n-4)}~.~~~~~~~~~~~~~~~~~~~~~~~~~~
\end{align*}
We define
\begin{align*}
\widetilde{\alpha}(i):=\frac{(1-n)(n-5)}{n+1}+P_1^n(x)+\frac{4n(n-2)}{(n+1)\binom{n}{i}} \left|P_i^n\left(\frac{n+1}{2}\right)\right|+\left|P_n^n(i)\right|~.
\end{align*}
As in the previous example, it's easy to see that $\alpha(i)\leq \widetilde{\alpha}(i)$ for $0\leq i\leq n$ and 
\begin{align*}
\widetilde{\alpha}(n-i)\leq \widetilde{\alpha}(i)~~\textrm{for}~~0\leq i\leq (n-1)/2~.
\end{align*}
Therefore, to prove that $\alpha(i)\leq 0$ for $1\leq i\leq n,$ we only have to show that $\widetilde{\alpha}(i)\leq 0$ for $7\leq i\leq (n-1)/2.$ It is follows from the next two lemmas.

\begin{lemma}\label{lem:sec_estimation}
If $n$ is odd positive integer and $i$ is an arbitrary integer number, $2\leq i\leq (n-1)/2,$ then
\begin{align*}
\left|P_i^n\left(\frac{n+1}{2}\right)\right|<\binom{n}{\lfloor \frac{i}{2}\rfloor}~.
\end{align*}
\end{lemma}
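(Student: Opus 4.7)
The plan is to obtain an explicit closed form for $P_i^n(\tfrac{n+1}{2})$ via a generating-function identification, and then compare directly with $\binom{n}{\lfloor i/2 \rfloor}$. Because $n$ is odd, both $(n+1)/2$ and $(n-1)/2$ are nonnegative integers, so expanding $(1-z)^{(n+1)/2}$ and $(1+z)^{(n-1)/2}$ by the binomial theorem and matching coefficients with (\ref{Krawtchouk_def}) identifies $P_i^n(\tfrac{n+1}{2})$ with the coefficient of $z^i$ in the polynomial $(1-z)^{(n+1)/2}(1+z)^{(n-1)/2}$. The key algebraic simplification is the factorisation
\[
(1-z)^{(n+1)/2}(1+z)^{(n-1)/2} = (1-z)(1-z^2)^{(n-1)/2},
\]
which is available precisely because $n$ is odd.

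Next I would expand $(1-z^2)^{(n-1)/2} = \sum_{m\geq 0}(-1)^m\binom{(n-1)/2}{m}z^{2m}$ and multiply by $(1-z)$. Since even powers of $z$ come only from the $1$ factor and odd powers only from the $-z$ factor, a brief parity split yields
\[
P_i^n\!\left(\tfrac{n+1}{2}\right) = \begin{cases} (-1)^{i/2}\binom{(n-1)/2}{i/2} & \text{if } i \text{ is even},\\[2pt] (-1)^{(i+1)/2}\binom{(n-1)/2}{(i-1)/2} & \text{if } i \text{ is odd},\end{cases}
\]
so that $\left|P_i^n(\tfrac{n+1}{2})\right| = \binom{(n-1)/2}{\lfloor i/2 \rfloor}$ in either case.

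Setting $m := \lfloor i/2 \rfloor$, the hypothesis $2 \leq i \leq (n-1)/2$ forces $1 \leq m \leq (n-1)/4$, so both $\binom{(n-1)/2}{m}$ and $\binom{n}{m}$ are positive integers. The strict monotonicity of $\binom{a}{m}$ in $a \geq m$ for fixed $m \geq 1$---immediate from the identity $\binom{a+1}{m}-\binom{a}{m}=\binom{a}{m-1}>0$---then delivers the strict inequality $\binom{(n-1)/2}{m} < \binom{n}{m}$, completing the argument. I do not foresee a substantive obstacle; the whole proof rests on the factorisation displayed above, and the remainder is a one-line monotonicity estimate. The same technique, applied to $(1-z)^{n/2+1}(1+z)^{n/2-1}=(1-z)^{2}(1-z^2)^{n/2-1}$, presumably underlies the even-$n$ analogue in Lemma \ref{lem:estimation}.
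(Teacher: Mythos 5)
Your proof is correct, and it takes a genuinely different route from the paper's. The paper proves the even-$n$ companion (Lemma \ref{lem:estimation}) by induction on $i$ using the three-term recurrence (\ref{prop_1}) evaluated at $x=\frac{n}{2}+1$ (where $n-2x=-2$), with a parity split on $k$ at each step, and then declares the proof of Lemma \ref{lem:sec_estimation} to be ``very similar'' (at $x=\frac{n+1}{2}$ one has $n-2x=-1$ and the same induction goes through). You instead read $P_i^n(x)$ off as the coefficient of $z^i$ in $(1-z)^x(1+z)^{n-x}$ --- an identity not listed among the paper's Krawtchouk properties but immediate from (\ref{Krawtchouk_def}) by multiplying out the two binomial series --- and exploit the factorisation $(1-z)^{(n+1)/2}(1+z)^{(n-1)/2}=(1-z)(1-z^2)^{(n-1)/2}$ to get the exact value $\bigl|P_i^n\bigl(\frac{n+1}{2}\bigr)\bigr|=\binom{(n-1)/2}{\lfloor i/2\rfloor}$ (I checked this against the paper's tabulated values for small $i$; it is right, and in fact holds for all $0\le i\le n$, not just $i\le (n-1)/2$). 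The inequality then reduces to the trivial strict monotonicity $\binom{(n-1)/2}{m}<\binom{n}{m}$ for $m=\lfloor i/2\rfloor\ge 1$. Your argument is shorter, yields a much sharper bound than the lemma asserts, and avoids the base cases and the odd/even case analysis of the paper's induction. One minor caveat on your closing aside: in the even-$n$ case the factor is $(1-z)^2$ rather than $(1-z)$, so the coefficients come out as $\pm\bigl(\binom{n/2-1}{m}-\binom{n/2-1}{m-1}\bigr)$ for even $i=2m$ and $\pm 2\binom{n/2-1}{m}$ for odd $i=2m+1$; the bound $<\binom{n}{\lfloor i/2\rfloor}$ still follows easily, but not as a single clean binomial identity. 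That aside does not affect the proof of the stated lemma, which stands as written.
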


\begin{lemma}\label{lem:sec_monot_sequence}
Let $n$ be odd integer. For $7\leq i\leq (n-1)/2$ we have
\begin{align*}
\frac{(i-4)\binom{n}{i}}{\binom{n}{\lfloor \frac{i}{2}\rfloor}}>\frac{2n(n-2)}{n+1}~.
\end{align*}
\end{lemma}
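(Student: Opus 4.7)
My plan is to adapt the strategy of the analogous Lemma \ref{lem:monot_sequence}, whose proof the authors give in the appendix. Set
\[
g(i) := \frac{(i-4)\binom{n}{i}}{\binom{n}{\lfloor i/2\rfloor}}, \qquad T(n) := \frac{2n(n-2)}{n+1};
\]
the goal is $g(i) > T(n)$ for every odd $n$ and every integer $i$ with $7 \leq i \leq (n-1)/2$ (the range is empty unless $n \geq 15$). The core tool is the product representation
\[
\frac{\binom{n}{i}}{\binom{n}{\lfloor i/2\rfloor}} \;=\; \prod_{j=\lfloor i/2\rfloor + 1}^{i} \frac{n+1-j}{j},
\]
a product of $\lceil i/2\rceil$ positive factors, each strictly increasing in $n$ and each at least $(n-i+1)/i \geq (n+3)/(n-1) > 1$ throughout the admissible range.

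Since every factor grows with $n$ while $T(n)$ is only linear, for each fixed $i$ the inequality is tightest at the smallest admissible $n$, namely $n = 2i+1$. I would first dispose of the small indices $i \in \{7, 8, 9, 10\}$ by direct computation: for each such $i$, $g(i)$ is an explicit polynomial in $n$ of degree $\lceil i/2\rceil \le 5$, and $(n+1)g(i) > 2n(n-2)$ reduces to a univariate polynomial inequality in $n$ (over odd $n \geq 2i+1$) verifiable at the boundary and by a degree/leading-coefficient argument thereafter.

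For the remaining range $i \geq 11$ I would retain the two leading factors of the product (those with smallest $j$) exactly and bound the remaining $\lceil i/2\rceil - 2$ tail factors uniformly from below by $(n-i+1)/i$. This gives
\[
g(i) \;\ge\; (i-4)\cdot \frac{(n-\lfloor i/2\rfloor)(n-\lfloor i/2\rfloor - 1)}{(\lfloor i/2\rfloor + 1)(\lfloor i/2\rfloor + 2)} \cdot \left(\frac{n-i+1}{i}\right)^{\lceil i/2\rceil - 2};
\]
because $\lfloor i/2\rfloor \le (n-1)/4$, the two retained factors alone contribute a quantity of order $n^2/i^2$, and after multiplication by $(i-4)$ this easily dominates $T(n) < 2n$ throughout $11 \leq i \leq (n-1)/2$.

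The main obstacle will be the boundary regime where $i$ is close to $(n-1)/2$. There the naive two-step ratio
\[
\frac{g(i+2)}{g(i)} \;=\; \frac{i-2}{i-4}\cdot\frac{(n-i)(n-i-1)}{(i+1)(i+2)}\cdot\frac{\lfloor i/2\rfloor+1}{n-\lfloor i/2\rfloor}
\]
can drop below $1$, so a straightforward induction on $i$ is unavailable, and the tail factors $(n-i+1)/i$ in the estimate above are only slightly above $1$ there. The bulk of the work therefore lies in verifying that the two retained leading factors carry enough weight to close the gap uniformly---a concrete but somewhat delicate polynomial inequality in $(n,i)$ analogous to the one handled in Lemma \ref{lem:monot_sequence}.
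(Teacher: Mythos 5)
Your structural critique is correct and is in fact the most valuable part of the proposal: the two-step ratio
\begin{align*}
\frac{g(i+2)}{g(i)}=\frac{i-2}{i-4}\cdot\frac{(n-i)(n-i-1)}{(i+1)(i+2)}\cdot\frac{\lfloor i/2\rfloor+1}{n-\lfloor i/2\rfloor}
\end{align*}
really does drop below $1$ near the top of the range (it tends to $1/3$ as $i\to (n-1)/2$; e.g.\ $n=31$ gives $g(15)<g(13)$). The paper's own proof of this lemma is declared ``very similar'' to the appendix proof of Lemma \ref{lem:monot_sequence}, which is exactly the base-case-plus-monotonicity induction you reject; and indeed that appendix argument contains a slip --- the printed ratio $a_{i+2}/a_i$ should have $2n-i$ rather than $n-2i$ in the denominator, and with the correct factor the asserted monotonicity is false (for $n=16$, $a_8=5\binom{16}{8}/\binom{16}{4}\approx 35.4<a_6\approx 42.9$). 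So you are not merely taking a different route; you have identified why the paper's route, transplanted verbatim, does not close.

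That said, your proposal is a plan rather than a proof, and the gap sits precisely where you locate it. The decisive inequality
\begin{align*}
(i-4)\cdot\frac{(n-\lfloor i/2\rfloor)(n-\lfloor i/2\rfloor-1)}{(\lfloor i/2\rfloor+1)(\lfloor i/2\rfloor+2)}\cdot\Bigl(\frac{n-i+1}{i}\Bigr)^{\lceil i/2\rceil-2}>\frac{2n(n-2)}{n+1}
\end{align*}
for $11\le i\le (n-1)/2$ is asserted to ``easily dominate'' but is never established, and it is genuinely tight: at $i=(n-1)/2$ the tail factors are each only $\tfrac{n+3}{n-1}$, the two retained factors contribute roughly $\tfrac{(3n+1)(3n-3)}{(n+3)(n+7)}$ (about $5.9$ at $n=23$, approaching $9$ only slowly), and $(i-4)\approx n/2$, so the margin over $2n(n-2)/(n+1)\approx 2n$ is a small constant factor, not an order of magnitude --- the phrase ``of order $n^2/i^2$'' obscures that $i^2\approx n^2/4$ there. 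Likewise the claim that for fixed $i$ the worst case is $n=2i+1$ needs an actual monotonicity-in-$n$ argument (the ratio $g(i)/T(n)$ is a quotient of polynomials and is only ``eventually'' increasing without further checking), and the finitely many cases $i\in\{7,8,9,10\}$ are promised but not carried out. Until the displayed inequality is proved uniformly (or replaced, say, by an explicit verification along the diagonal $i=(n-1)/2$ plus a monotonicity argument moving $i$ downward where the ratio \emph{is} at least $1$), the lemma is not established.
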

\noindent Proofs of the Lemmas \ref{lem:sec_estimation}, \ref{lem:sec_monot_sequence} are very similar to those of Lemmas \ref{lem:estimation}, \ref{lem:monot_sequence}, respectively, and they are omitted. Thus, we have proved that the conditions of the Corollary \ref{second_bound} are satisfied and we have the following bound.
\begin{align*}
\beta(n,M)\geq \frac{7n-5}{2(n+1)}-\frac{2(n-1)}{M}~,~~\textrm{if}~n\equiv1~(mod~4)~,~~n\ne 1~.
\end{align*}
From Lemma \ref{even-odd-relation}, by choosing the following polynomials:
\begin{align*}
\mu(x)=\frac{2+5n-n^2}{n+2}+P_1^n(x)+\frac{4(n^2-1)}{(n+2)\binom{n+1}{\frac{n+2}{2}}}
\left(P_{\frac{n}{2}}^n(x)+P_{\frac{n+2}{2}}^n(x)\right)+ P_n^n(x)~,
\end{align*}
if $n\equiv0~(mod~4),$
\begin{align*}
\widetilde{\mu}(x)=\frac{9+4n-n^2}{n+3}+P_1^n(x)+\frac{4n(n+2)}{(n+3)\binom{n+2}{\frac{n+3}{2}}} \left(P_{\frac{n-1}{2}}^n(x)+P_{\frac{n+3}{2}}^n(x)\right)~\\
+\frac{8n(n+2)}{(n+3)\binom{n+2}{\frac{n+3}{2}}} P_{\frac{n+1}{2}}^n(x)+ P_n^n(x)~,~~~~~~~~~~~~~~~~~~~
\end{align*}
if $n\equiv3~(mod~4),~n\ne 3,$ and 
\begin{align*}
\widehat{\mu}(x)=\frac{16+3n-n^2}{n+4}+P_1^n(x)+\frac{4(n+1)(n+3)}{(n+4)\binom{n+3}{\frac{n+4}{2}}} \left(P_{\frac{n-2}{2}}^n(x)+P_{\frac{n+4}{2}}^n(x)\right)~\\
+\frac{12(n+1)(n+3)}{(n+4)\binom{n+3}{\frac{n+4}{2}}} \left(P_{\frac{n}{2}}^n(x)+P_{\frac{n+2}{2}}^n(x)\right)+ P_n^n(x)~,~~~~~~~~~~~~~~
\end{align*}
if $n\equiv2~(mod~4),~n\ne 2,$ we obtain the bounds which are summarized in the next theorem.

\begin{theorem}\label{thm:1_2_3_4_zeros}
For $n>3$
\begin{align*}
\beta(n,M)\geq \left\{ \begin{array}{c}
\frac{7n+2}{2(n+2)}-\frac{2n}{M}~~~~~~~~~\textrm{if}~n\equiv0~(mod~4)~\\
 \\
\frac{7n-5}{2(n+1)}-\frac{2(n-1)}{M}~~~~\textrm{if}~n\equiv1~(mod~4)~\\
 \\
\frac{7n+16}{2(n+4)}-\frac{2(n+2)}{M}~~~~\textrm{if}~n\equiv2~(mod~4)~\\
 \\
\frac{7n+9}{2(n+3)}-\frac{2(n+1)}{M}~~~~\textrm{if}~n\equiv3~(mod~4)~.
\end{array} \right.
\end{align*}
\end{theorem}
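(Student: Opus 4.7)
The plan is to apply Corollary \ref{second_bound} separately in each of the four residue classes of $n$ modulo $4$, using one genuinely new polynomial for $n\equiv 1\pmod 4$ and propagating it with Lemma \ref{even-odd-relation} to the other three cases.

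For the case $n\equiv 1\pmod 4$, $n\ne 1$, I would use the polynomial $\alpha(x)$ given in (\ref{odd-pol_1_2_3_4_zeros}). The coefficient hypothesis $\alpha_1=1$ and $\alpha_j\ge 0$ for $j\ge 2$ is immediate from the expansion, whose only nonzero Krawtchouk coefficients are $\alpha_0$, $\alpha_1=1$, $\alpha_{(n+1)/2}>0$, and $\alpha_n=1$. To establish $\alpha(i)\le 0$ for $1\le i\le n$ I would proceed in three pieces: explicit computation of $\alpha(1)=\alpha(2)=\alpha(3)=\alpha(4)=0$, of $\alpha(5)$, $\alpha(6)$, and of $\alpha(n-j)$ for $0\le j\le 6$, using (\ref{prop_1}) and (\ref{prop_4}); a symmetry argument $\widetilde{\alpha}(n-i)\le \widetilde{\alpha}(i)$ that reduces the problem to $i\le (n-1)/2$; and a majorization $\alpha(i)\le \widetilde{\alpha}(i)$ whose negativity on $7\le i\le (n-1)/2$ follows from Lemmas \ref{lem:sec_estimation} and \ref{lem:sec_monot_sequence}. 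Substituting $\alpha_0=\frac{(1-n)(n-5)}{n+1}$ and $\alpha(0)=4(n-1)$ into Corollary \ref{second_bound} then yields $\frac{7n-5}{2(n+1)}-\frac{2(n-1)}{M}$.

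I would then handle the other residue classes by iterating Lemma \ref{even-odd-relation}, once for $n\equiv 0\pmod 4$, twice for $n\equiv 3\pmod 4$, and three times for $n\equiv 2\pmod 4$, starting from the base polynomial of length $n+1$, $n+2$, $n+3$ respectively. The resulting polynomials coincide with the explicit $\mu$, $\widetilde{\mu}$, $\widehat{\mu}$ written in the text preceding the theorem. The hypotheses of Corollary \ref{second_bound} carry through each iteration: the identity $\mu(x)=\alpha(x)$ on integer points in $[0,n-1]$ transfers nonpositivity from $\alpha$ to $\mu$, while the recursion $\mu_j=\alpha_j+\alpha_{j+1}$, combined with $\alpha_1=1$ and $\alpha_2=\alpha_3=\alpha_4=0$ (valid throughout the range $n>3$), preserves $\mu_1=1$. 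The numerical identities $\mu^{(k)}_0=\alpha_0+k$ and $\mu^{(k)}(0)=\alpha(0)$ reduce the bound after $k$ iterations to $\frac{1}{2}((n+k)+\alpha_0-\alpha(0)/M)$ with $\alpha_0$, $\alpha(0)$ evaluated at starting length $m=n+k$; the three remaining inequalities of the theorem emerge by substitution and routine algebra.

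The main obstacle is concentrated in the base case, specifically in proving $\alpha(i)\le 0$ uniformly in $i$ and $n$. This rests on the two technical estimates of Lemmas \ref{lem:sec_estimation} and \ref{lem:sec_monot_sequence} to push the majorant $\widetilde{\alpha}(i)$ below zero on the bulk of the range; the remaining verification is finite bookkeeping at the endpoints, and the transfer to the other residue classes via Lemma \ref{even-odd-relation} is then mechanical.
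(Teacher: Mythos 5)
Your proposal follows the paper's proof essentially verbatim: the same base polynomial (\ref{odd-pol_1_2_3_4_zeros}) for $n\equiv1\pmod 4$, the same endpoint computations, majorant $\widetilde{\alpha}$, symmetry reduction, and Lemmas \ref{lem:sec_estimation}, \ref{lem:sec_monot_sequence}, followed by propagation to the other residue classes via Lemma \ref{even-odd-relation} (the paper writes out the iterated polynomials $\mu$, $\widetilde{\mu}$, $\widehat{\mu}$ explicitly, which is what your iteration produces). The argument is correct; the only nitpick is that your parenthetical claim $\alpha_2=\alpha_3=\alpha_4=0$ fails at base length $m=5$ (where $\alpha_3=\alpha_{(m+1)/2}\ne0$), but that case requires only one iteration and hence only $\alpha_2=0$, so nothing breaks.
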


It's easy to see that the bounds of Theorems \ref{thm:1_2_3_zeros} and \ref{thm:1_2_3_4_zeros} give similar estimations when the size of a code is about $2n.$

\begin{theorem}
\begin{align*}
\lim_{n\to \infty}\beta(n,2n)=\frac{5}{2}~.
\end{align*}
\end{theorem}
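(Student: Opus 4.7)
The plan is a sandwich: the lower bound comes directly from Theorem~\ref{thm:1_2_3_zeros}, and I would match it from above by an explicit construction.

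For the lower bound, substituting $M=2n$ in Theorem~\ref{thm:1_2_3_zeros} yields $\beta(n,2n)\geq\frac{3n}{n+2}-\frac{1}{2}$ for even $n$ and $\beta(n,2n)\geq\frac{3(n+1)}{n+3}-\frac{n+1}{2n}$ for odd $n$. Both expressions tend to $3-\frac{1}{2}=\frac{5}{2}$, giving $\liminf_{n\to\infty}\beta(n,2n)\geq\frac{5}{2}$.

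For the matching upper bound I would exhibit, for each $n\geq 2$, an explicit $(n,2n)$ code whose average distance tends to $\frac{5}{2}$. A natural candidate is the ``star'' code
\[
\cC_n=\{{\bf 0},\,e_1,e_2,\ldots,e_n,\,e_1+e_2,\,e_1+e_3,\,\ldots,\,e_1+e_n\},
\]
consisting of the zero word, all $n$ weight-$1$ words, and the $n-1$ weight-$2$ words containing position~$1$, so that $|\cC_n|=1+n+(n-1)=2n$. The average distance is computed by enumerating the five types of ordered pairs: $({\bf 0},e_i)$ at distance~$1$; $({\bf 0},e_1+e_j)$ at distance~$2$; $(e_i,e_j)$ with $i\neq j$ at distance~$2$; $(e_i,e_1+e_j)$ at distance~$1$ when $i\in\{1,j\}$ and at distance~$3$ otherwise; and $(e_1+e_j,e_1+e_k)$ with $j\neq k$ at distance~$2$. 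Summing the ordered-pair distances gives
\[
2n+4(n-1)+2n(n-1)+2(n-1)(3n-4)+2(n-1)(n-2)=10n^2-16n+8,
\]
whence
\[
\bar d(\cC_n)=\frac{10n^2-16n+8}{4n^2}=\frac{5}{2}-\frac{4}{n}+\frac{2}{n^2}\longrightarrow\frac{5}{2}.
\]
Since $\beta(n,2n)\leq \bar d(\cC_n)$, the lower and upper bounds combine to give $\lim_{n\to\infty}\beta(n,2n)=\frac{5}{2}$.

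The main difficulty is in identifying a good construction: the lower bound is forced by the previous theorem and the tally is mechanical, but a naive candidate (for instance the union of the weight-$1$ and weight-$(n-1)$ words, whose average distance is of order $n$) misses the target by a factor of $n$. The ``star'' construction is motivated by the fact that the polynomial $\alpha(x)$ of Example~\ref{exmpl:pol_1_2_3_zeros} vanishes precisely at $i=1,2,3$; an extremal $(n,2n)$ code should therefore have its distance distribution supported on $\{0,1,2,3\}$, and $\cC_n$ is designed to satisfy exactly this condition.
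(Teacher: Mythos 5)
Your proposal is correct and takes essentially the same route as the paper: the lower bound is read off from Theorem~\ref{thm:1_2_3_zeros} exactly as you do, and the paper's upper bound uses the very same $(n,2n)$ code (the zero word, all $n$ weight-one words, and the $n-1$ weight-two words containing position $1$), whose average distance $\frac{5}{2}-\frac{4n-2}{n^2}$ agrees with your tally $\frac{5}{2}-\frac{4}{n}+\frac{2}{n^2}$.
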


\begin{proof}
Let $\cC$ be the following $(n,2n)$ code:
\begin{align*}
\left. \begin{array}{ccc}
000 & \cdots & 00 \\
\hline
100 & \cdots & 00 \\
010 & \cdots & 00 \\
\vdots & \ddots & \vdots \\
000 & \cdots & 01 \\
\hline
110 & \cdots & 00 \\
101 & \cdots & 00 \\
\vdots & \ddots & \vdots \\
100 & \cdots & 01 \\
\end{array} \right.
\end{align*}
One can evaluate that 
\begin{align}\label{upper_estim}
\beta(n,2n)\leq \overline{d}(\cC)=\frac{5}{2}-\frac{4n-2}{n^2}~.
\end{align}
On the other hand, Theorem \ref{thm:1_2_3_zeros} gives
\begin{align}\label{lower_estim}
\beta(n,2n)\geq \left\{ \begin{array}{c}
\frac{5}{2}-\frac{6}{n+2}~~~~~~~\textrm{if}~n~\textrm{is even}\\
 \\
\frac{5}{2}-\frac{13n+3}{2n(n+3)}~~~~\textrm{if}~n~\textrm{is odd}~.
\end{array} \right.
\end{align}
The claim of the theorem follows by combining (\ref{upper_estim}) and (\ref{lower_estim}).
\end{proof}

\section{Recursive inequality on $\beta(n,M)$}\label{Sec:Recursive}

The following recursive inequality was obtained in \cite{XiaFu}:

\begin{align}\label{Xia_Fu_recursive}
\beta(n,M+1)\geq \frac{M^2}{(M+1)^2}\beta(n,M)+\frac{Mn}{(M+1)^2}\left(1-\sqrt{1-\frac{2}{n}\beta(n,M)}\right)~.
\end{align}

\noindent In the next theorem we give a new recursive inequality.
\begin{theorem}\label{thm:monotonicity}
For positive integers $n$ and $M,$ $2\leq M\leq 2^n-1,$
\begin{align}\label{eqtn:monotonicity}
\beta(n,M+1)\geq \frac{M^2}{M^2-1}\beta(n,M)~.
\end{align}
\end{theorem}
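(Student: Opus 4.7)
The plan is to use a standard averaging argument: take an extremal $(n,M+1)$ code and examine the $M+1$ many $(n,M)$ subcodes obtained by deleting one codeword at a time.

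Let $\cC'=\{\seq{c}{M}\}$ be an extremal $(n,M+1)$ code, so that $\overline{d}(\cC')=\beta(n,M+1)$, and write
\begin{align*}
T=\sum_{i=0}^M\sum_{j=0}^M d(c_i,c_j)=(M+1)^2\,\overline{d}(\cC')~.
\end{align*}
For each $k\in\{0,1,\dots,M\}$, the set $\cC'_k:=\cC'\setminus\{c_k\}$ is an $(n,M)$ code, and using $d(c_k,c_k)=0$ one computes
\begin{align*}
\overline{d}(\cC'_k)=\frac{1}{M^2}\left(T-2\sum_{j=0}^M d(c_k,c_j)\right)~.
\end{align*}
By the definition of $\beta(n,M)$, each term on the left is at least $\beta(n,M)$.

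Next I would sum over $k$. The cross-terms collapse because $\sum_{k=0}^M\sum_{j=0}^M d(c_k,c_j)=T$, giving
\begin{align*}
(M+1)\beta(n,M)\leq\sum_{k=0}^M\overline{d}(\cC'_k)=\frac{1}{M^2}\bigl((M+1)T-2T\bigr)=\frac{(M-1)T}{M^2}=\frac{(M-1)(M+1)^2}{M^2}\,\overline{d}(\cC')~.
\end{align*}
Dividing by $M+1$ and rearranging yields
\begin{align*}
\beta(n,M+1)=\overline{d}(\cC')\geq\frac{M^2}{M^2-1}\,\beta(n,M)~,
\end{align*}
which is (\ref{eqtn:monotonicity}). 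The hypothesis $M\geq 2$ is exactly what makes the denominator $M^2-1$ nonzero (and positive), and $M+1\leq 2^n$ guarantees that an $(n,M+1)$ code exists in the first place.

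There is no real obstacle here; the only subtle point to state carefully is the double-counting identity $\sum_k\sum_j d(c_k,c_j)=T$, which is what causes the factor $M-1$ (each ordered pair with distinct indices is missing from exactly $M-1$ of the $M+1$ subsums) to appear and ultimately produces the clean ratio $M^2/(M^2-1)$.
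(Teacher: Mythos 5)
Your proof is correct and is essentially the paper's argument: the paper applies the averaging first, picking a single codeword $c_0$ with $\sum_{c\in\cC}d(c_0,c)\geq (M+1)\beta(n,M+1)$ and deleting just that one, while you delete every codeword in turn and average afterward --- the same computation in a different order. (One minor wording slip in your closing remark: each ordered pair with distinct indices is \emph{present in}, not missing from, exactly $M-1$ of the $M+1$ subsums, which is what your identity $(M+1)T-2T=(M-1)T$ actually encodes.)
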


\begin{proof}
Let $\cC$ be an extremal $(n,M+1)$ code, i.e.,
\begin{align*}
\beta(n,M+1)=\overline{d}(\cC)=\frac{1}{(M+1)^2}\sum_{c\in \cC}\sum_{c'\in \cC}d(c,c')~.
\end{align*}
Then there exists $c_0\in \cC$ such that 
\begin{align}\label{eqtn:averaging}
\sum_{c\in \cC}d(c_0,c)\geq (M+1)\beta(n,M+1)~.
\end{align}
Consider an $(n,M)$ code $\widetilde{\cC}=\cC\setminus \{c_0\}.$ Using (\ref{eqtn:averaging}) we obtain
\begin{align*}
\beta(n,M)\leq \overline{d}(\widetilde{\cC})=\frac{1}{M^2}\sum_{c\in \widetilde{\cC}}\sum_{c'\in \widetilde{\cC}}d(c,c')
=\frac{1}{M^2}\left(\sum_{c\in \cC}\sum_{c'\in \cC}d(c,c')-2\sum_{c\in \cC}d(c_0,c)\right)
\end{align*}
\begin{align*}
\leq \frac{1}{M^2}\left((M+1)^2\beta(n,M+1)-2(M+1)\beta(n,M+1)\right)=\frac{M^2-1}{M^2}\beta(n,M+1)~.
\end{align*}
\end{proof}

\begin{lemma}
For positive integers $n$ and $M,$ $2\leq M\leq 2^n-1,$ the RHS of (\ref{eqtn:monotonicity}) is not smaller than RHS of (\ref{Xia_Fu_recursive}).
\end{lemma}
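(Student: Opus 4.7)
The plan is to verify the inequality by direct algebraic reduction. Writing $\beta := \beta(n,M)$, my first step is to move the $\beta$-term $\frac{M^2}{(M+1)^2}\beta$ from the RHS to the LHS. The resulting coefficient on the LHS simplifies to $\frac{M^2}{M^2-1}-\frac{M^2}{(M+1)^2} = \frac{2M^2}{(M-1)(M+1)^2}$, using the identity $(M+1)^2-(M^2-1) = 2(M+1)$. After multiplying through by $(M+1)^2/M$, the claim reduces to
\[
\frac{2M\beta}{M-1}\;\geq\; n\left(1-\sqrt{1-\tfrac{2\beta}{n}}\right).
\]

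My next step is the substitution $s := \sqrt{1-2\beta/n}$, which is a real number in $[0,1]$ because $\beta(n,M)\leq n/2$ (each coordinate contributes at most $M^2/2$ to $\sum_{c,c'} d(c,c')$, so $\overline{d}(\cC)\leq n/2$). Writing $\beta = n(1-s)(1+s)/2$ turns the displayed inequality into $\frac{Mn(1-s)(1+s)}{M-1}\geq n(1-s)$. In the degenerate case $s=1$ (i.e., $\beta=0$) both sides vanish; otherwise $1-s>0$ and I can divide through by $n(1-s)$ to reduce the claim to $M(1+s)/(M-1)\geq 1$, i.e., $Ms\geq -1$, which is obvious since $M,s\geq 0$.

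The argument is a purely routine algebraic manipulation, with no real obstacle. The single conceptual idea is the substitution $s = \sqrt{1-2\beta/n}$, which exposes the common factor $1-s$ on both sides of the reduced inequality and turns the whole comparison into a triviality; the fact that $\beta(n,M)\leq n/2$ is what makes this substitution legitimate in the first place.
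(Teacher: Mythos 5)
Your proof is correct, and it takes a genuinely different route from the paper's. The paper does not manipulate the square root directly: it observes that the desired comparison reduces to the single condition $\beta(n,M)\leq \frac{M^2-1}{M^2}\cdot\frac{n}{2}$, and then establishes that condition by invoking Theorem \ref{thm:monotonicity} itself together with the monotonicity of $\beta(n,\cdot)$ and the value $\beta(n,2^n)=n/2$, via the chain $\beta(n,M)\leq \frac{M^2-1}{M^2}\beta(n,M+1)\leq \frac{M^2-1}{M^2}\beta(n,2^n)$. Your argument instead shows the comparison holds unconditionally for any quantity $\beta\in[0,n/2]$: the substitution $s=\sqrt{1-2\beta/n}$ factors $1-s$ out of both sides and collapses the inequality to $Ms\geq -1$. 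What your version buys is self-containedness — it uses only the elementary bound $0\leq\beta(n,M)\leq n/2$ rather than leaning on the recursive inequality being compared — and it incidentally shows the paper's ``iff'' reduction is really only an ``if'' (e.g.\ at $\beta=n/2$ the comparison still holds even though $\beta>\frac{M^2-1}{M^2}\cdot\frac{n}{2}$), though that does not affect the validity of either proof. The paper's version is shorter on the page but logically heavier, since it threads the conclusion through Theorem \ref{thm:monotonicity}.
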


\begin{proof}
One can verify that RHS of (\ref{eqtn:monotonicity}) is not smaller than RHS of (\ref{Xia_Fu_recursive}) iff
\begin{align*}
\beta(n,M)\leq \frac{M^2-1}{M^2}\cdot \frac{n}{2}~.
\end{align*}
By (\ref{eqtn:monotonicity}) we have
\begin{align*}
\beta(n,M)\leq \frac{M^2-1}{M^2}\beta(n,M+1)\leq \frac{M^2-1}{M^2}\beta(n,2^n)=\frac{M^2-1}{M^2}\cdot \frac{n}{2}~,
\end{align*}
which completes the proof.
\end{proof}

\section{Appendix}

{\bf Proof of Lemma \ref{lem:estimation}:} The proof is by induction. One can easily see from (\ref{values_P_i_n/2+1}) that the claim is true for $2\leq i\leq 5,$ where $i\leq n/2.$ Assume that we have proved the claim for $i,$ $4\leq i\leq k\leq n/2-1.$ Thus
\begin{align*}
\left|P_{k+1}^n\left(\frac{n}{2}+1\right)\right|=\left|\frac{(-2)P_{k}^{n}\left(\frac{n}{2}+1\right)-(n-k+1)
P_{k-1}^{n}\left(\frac{n}{2}+1\right)}{k+1}\right|
\end{align*}
\begin{align*}
\leq \frac{2}{k+1}\left|P_{k}^{n}\left(\frac{n}{2}+1\right)\right|+\frac{n-k+1}{k+1}\left|P_{k-1}^{n}\left(\frac{n}{2}+1\right)\right|
\end{align*}
\begin{align*}
< \frac{2}{k+1}\binom{n}{\lfloor \frac{k}{2}\rfloor}+\frac{n-k+1}{k+1}\binom{n}{\lfloor \frac{k-1}{2}\rfloor}=(*)~.
\end{align*}
We distinguish between two cases. If $k$ is odd, then
\begin{align*}
(*)=\frac{2}{k+1}\binom{n}{\frac{k-1}{2}}+\frac{n-k+1}{k+1}\binom{n}{\frac{k-1}{2}}
=\frac{2}{k+1}\binom{n}{\frac{k-1}{2}}\left(1+\frac{n-k+1}{2}\right)
\end{align*}
\begin{align*}
=\frac{1}{n-\frac{k-1}{2}}\cdot \frac{n-\frac{k-1}{2}}{\frac{k+1}{2}}\binom{n}{\frac{k-1}{2}}\frac{n-k+3}{2}
=\frac{n-k+3}{2n-k+1}\binom{n}{\frac{k+1}{2}}<\binom{n}{\frac{k+1}{2}}~.
\end{align*}
Therefore, for odd $k,$ we obtain
\begin{align*}
\left|P_{k+1}\left(\frac{n}{2}+1\right)\right|<\binom{n}{\frac{k+1}{2}}=\binom{n}{\lfloor \frac{k+1}{2}\rfloor}~.
\end{align*}
If $k$ is even, then
\begin{align*}
(*)=\frac{2}{k+1}\binom{n}{\frac{k}{2}}+\frac{n-k+1}{k+1}\binom{n}{\frac{k}{2}-1}
\end{align*}
\begin{align*}
=\frac{2}{k+1}\binom{n}{\frac{k}{2}}+\frac{n-k+1}{k+1}\cdot \frac{\frac{k}{2}}{n-(\frac{k}{2}-1)}\cdot 
\frac{n-(\frac{k}{2}-1)}{\frac{k}{2}}\binom{n}{\frac{k}{2}-1}
\end{align*}
\begin{align*}
=\binom{n}{\frac{k}{2}}\left(\frac{2}{k+1}+\frac{n-k+1}{2n-k+2}\cdot \frac{k}{k+1}\right)~.
\end{align*}
Since $k\geq 4,$ we have
\begin{align*}
(*)=\binom{n}{\frac{k}{2}}\left(\frac{2}{k+1}+\overbrace{\frac{n-k+1}{2n-k+2}}^{<1/2}\cdot \overbrace{\frac{k}{k+1}}^{<1}\right)
<\binom{n}{\frac{k}{2}}\left(\frac{2}{5}+\frac{1}{2}\right)<\binom{n}{\frac{k}{2}}~.
\end{align*}
Therefore, for even $k,$ we obtain
\begin{align*}
\left|P_{k+1}\left(\frac{n}{2}+1\right)\right|<\binom{n}{\frac{k}{2}}=\binom{n}{\lfloor \frac{k+1}{2}\rfloor}~.
\end{align*}
\qed

\noindent {\bf Proof of Lemma \ref{lem:monot_sequence}:} Denote
\begin{align*}
a_i=\frac{(i-3)\binom{n}{i}}{\binom{n}{\lfloor \frac{i}{2}\rfloor}}~,~~6\leq i\leq n/2~.
\end{align*}
Thus,
\begin{align*}
\frac{a_6(n+2)}{n(n-1)}=\frac{(n+2)(n-3)(n-4)(n-5)}{40n(n-1)}
\end{align*}
\begin{align*}
=\frac{(n-2)(n-7)}{40}+\frac{48n-120}{40n(n-1)}\overbrace{\geq}^{n\geq 12}\frac{5}{4}+\frac{48\cdot 12-120}{40n(n-1)}>\frac{5}{4}
\end{align*}
and we have proved that $\displaystyle{a_6>\frac{n(n-1)}{n+2}}.$ Let's see that $a_i\geq a_6$ for $6\leq i\leq n/2.$ Let $i$ be even integer such that $6\leq i\leq n/2-2.$ Then
\begin{align*}
\frac{a_{i+2}}{a_i}=\frac{(i-1)(n-i-1)(n-i)}{(i-3)(i+1)(n-2i)}\overbrace{>}^{i\geq 6} \frac{(i-3)(n-2i)(n-i)}{(i-3)(i+1)(n-2i)}=\frac{n-i}{i+1}\overbrace{>}^{i\leq n/2-2}1~.
\end{align*}
Together with $\displaystyle{a_6>\frac{n(n-1)}{n+2}},$ this implies that $\displaystyle{a_i>\frac{n(n-1)}{n+2}}$ for every even integer $i,$~\\ $6\leq i\leq n/2.$

Now let $i$ be even integer such that $6\leq i\leq n/2-1.$ Then
\begin{align*}
\frac{a_{i+1}}{a_i}=\frac{(i-2)(n-i)}{(i-3)(i+1)}>\frac{n-i}{i+1}\overbrace{>}^{i\leq n/2-1}1~,
\end{align*}
which completes the proof.
\qed




\begin{thebibliography}{10}


\bibitem{AhlAlt}
R. Ahlswede and I. Alth{\"{o}}fer, ``The asymptotic behaviour of diameters in the average", in {\em J. Combin. Theory Ser. B} 61, pp. 167--177, 1994.
\bibitem{AhlKat}
R. Ahlswede and G. Katona, ``Contributions to the geometry of Hamming spaces", in {\em Discrete Math.} 17, pp. 1--22, 1977.
\bibitem{AltSill}
I. Alth{\"{o}}fer and T. Sillke, ``An ``average distance'' inequality for large subsets of the cube", in {\em J. Combin. Theory Ser. B} 56, pp. 296--301, 1992.
\bibitem{BBMOS}
M. R. Best, A. E. Brouwer, F. J. MacWilliams, A. M. Odlyzko, and
N. J. A. Sloane, ``Bounds for binary codes of length less than
25", {\em IEEE Trans. on Inform. Theory}, vol. 24, pp. 81--93,
Jan. 1978.
\bibitem{Del1}
Ph. Delsarte, ``An algebraic approach to the association schemes
of coding theory ", {\em Philips Research Reports Supplements},
No. 10, 1973.
\bibitem{FuWeiYeu}
F. -W. Fu,  V. K. Wei and R. W. Yeung, ``On the minimum average distance of binary codes: linear programming approach", in {\em Discrete Appl. Math.} 111, pp. 263--281, 2001.
\bibitem{JaeKheMol}
F. Jaeger, A. Khelladi and M. Mollard, ``On shortest cocycle covers of graphs", in {\em J. Combin. Theory Ser. B} 39, pp. 153--163, 1985.
\bibitem{KrLi}
I. Krasikov and S. Litsyn, \emph{Survey of binary Krawtchouk
polynomials}, DIMACS Series in Discrete Mathematics and
Theoretical Computer Science, \textbf{56} (2001), 199--211.
\bibitem{Kun}
A. K{\"{u}}ndgen, ``Minimum average distance subsets in the Hamming cube", in {\em Discrete Math.} 249, pp. 149--165, 2002.
\bibitem{XiaFu}
S. -T. Xia and F. -W. Fu, ``On the average Hamming distance for binary codes", in {\em Discrete Appl. Math.} 89, pp. 269--276, 1998.



\end{thebibliography}
\end{document}